\numberwithin{equation}{section}
\theoremstyle{plain} 
\newtheorem{lemma}{Lemma}[section] 
\newtheorem{theorem}[lemma]{Theorem}
\newtheorem{corollary}[lemma]{Corollary}
\newtheorem{proposition}[lemma]{Proposition}
\theoremstyle{definition} 
\newtheorem{definition}{Definition}[section] 
\newtheorem{remark}{Remark}[section] 
\newtheorem{example}{Example}
\title{Two Families of Linear Codes Containing Non-GRS MDS Codes}
\author{Kanat Abdukhalikov \\	
Department of Mathematical Sciences, \\
UAE University, PO Box 15551, Al Ain, UAE\\
Email: abdukhalik@uaeu.ac.ae \bigskip  \\  
Gyanendra K. Verma \\	
Department of Mathematical Sciences, \\
UAE University, PO Box 15551, Al Ain, UAE\\
and \\
Department of Electrical Communication Engineering, \\
Indian Institute of Science Bangalore, 560012, India\\
Email:  gkvermaiitdmaths@gmail.com}
\date{}
\begin{document}
	\maketitle
\begin{abstract} 
We construct two new families of linear codes by modifying the generator matrices of generalized Reed-Solomon (GRS) codes. For these codes, we explicitly derive parity-check matrices and establish necessary and sufficient conditions ensuring the MDS property. Additionally, we explore subfamilies within these constructions that are non-GRS MDS codes.  We also characterize their self-orthogonal and self-dual properties and present some explicit constructions and examples.
\end{abstract}
\textbf{Keywords}: MDS codes, GRS codes, non-GRS MDS codes, parity check matrix, self-orthogonal codes, self-dual codes .\\ 
	\textbf{Mathematics subject classification}: 94B05, 94B65.\\

\section{Introduction}
Let $\mathbb{F}_q$ be a finite field with $q$ elements, where $q$ is a prime power. A linear code $C$ over $\mathbb{F}_q$ with parameters $[n,k,d]$ is a linear subspace of $\mathbb{F}_q^n$ with dimension $k$ and minimum Hamming distance $d$. The parameters of a linear code $C$ are constrained by the Singleton bound $d\leq n-k+1$. Codes attaining the bound, that is, $d=n-k+1$, are called maximum distance separable (MDS) codes. The study of MDS codes has been driven by their diverse applications, ranging from distributed storage systems and combinatorial designs to finite geometry, random error channels, and secret sharing schemes (see \cite{Cadambe2011,Hirschfeld1998,Macwilliams1977,Sakakibara2013}). Due to these applications, the properties of MDS codes have been extensively investigated, including their existence, classification, and balanced variants \cite{Dau2013,Dau2014,Kokkala2014}.

A linear MDS code is equivalent to an arc in finite projective geometry. Namely, an $[n,k]$ linear code $C$ over $\mathbb{F}_q$ generated by matrix $G$ is an MDS code if and only if the set consisting of columns of $G$ as coordinate vectors is an arc of size $n$ in projective space $PG(k-1,q)$ (see \cite{Ball2019}).  Segre \cite{Segre1955} proposed an upper bound on the maximum size of an arc in $PG(k-1,q)$, known as the MDS conjecture, which states that if $3 \leq k \leq q-1$, then an arc in $PG(k-1,q)$
 has size at most $q+1$ except for the case when $q$ is even and $k=3$ or $k=q-1$ where arcs of size $q+2$ exist. Equivalently, in the language of coding theory, for an $[n,k,d]$ MDS code where $3\leq d< n$, the length $n$ satisfies $n \le q+1$, unless $k \in \{3, q-1\}$ and $q$ is even, in which case $n \le q+2$. The conjecture has been proved for several particular cases \cite{Ball2019}; for instance, Ball \cite{Ball2012} proved that the conjecture is true over prime fields. The general case of the conjecture remains open.  A prominent and well-studied family of MDS codes is the class of generalized Reed-Solomon (GRS) codes \cite{Macwilliams1977}. GRS codes over $\mathbb{F}_q$ can have length at most $q+1$. The GRS codes of length $q+1$ correspond to normal rational curves in projective spaces. All known $(q+1)$-arcs in $PG(k-1,q)$, $k\ge 4$, are normal rational curves (equivalently, GRS codes) except for the cases $k=5$, $q=9$ (Glynn \cite{Glynn1986}) and $k=4$, $q=2^m$ (Segre arcs \cite{Casse1982}). This leads to a conjecture proposed in \cite{Ball2019}, for $6\leq k\leq q-5$, all $q+1$ size arcs in $PG(k-1,q)$ are GRS codes. In view of the above statements, the study of non-GRS MDS codes remains interesting both from the perspective of theoretical classification of arcs in finite projective spaces (equivalently, MDS codes) and for practical applications in cryptography. For example, non-GRS MDS codes have applications in providing resistance against Sidelnikov-Shestakov and Wieschebrink attacks in cryptographic systems, where GRS codes do not offer such cryptographic security \cite{beelen2018,Lavauzelle2019}. Formally, MDS codes that are not  (monomial) equivalent to GRS codes are called non-generalized Reed-Solomon (non-GRS) MDS codes.

Roth and Lempel \cite{Roth1989} were the first to study non-GRS MDS codes by augmenting the generator matrices of GRS codes with two additional columns. The study of non-GRS MDS codes later gained renewed interest through the work of Beelen et al. \cite{beelen2018,Beelen2022}, who introduced twisted Reed-Solomon (TRS) codes and demonstrated that certain families of these codes contain non-GRS MDS codes. Recently, in \cite{bhagat2025,Jin2024,Li2024,Liu2025,Liu2024,Wu2024,Zhi2025} and references therein, the authors investigated several families of non-GRS MDS codes using a variety of techniques. These studies have led to the introduction of several new classes of codes, including twisted Reed–Solomon (TRS) codes, generalized twisted Reed–Solomon (GTRS) codes, non-GTRS codes, column twisted and row-column twisted Reed-Solomon codes that contain non-GRS MDS codes. In \cite{Han2024,Li2024,Zhi2025}, the authors studied the codes obtained by deleting the penultimate row of the GRS code generator matrices and extending the resulting codes. These constructions produced families of non-GRS MDS or NMDS codes that cannot be derived from GTRS codes. Furthermore, in \cite{Abdukhalikov2025nonrs,Jin2024}, the authors investigated the codes constructed by removing an arbitrary row from GRS code generator matrices and their extensions, established necessary and sufficient conditions for these codes to be non-GRS MDS codes, and provided explicit constructions.

A linear code is called a self-orthogonal code if $C\subseteq C^{\perp}$  and self-dual if  $C=C^{\perp}$, where $C^{\perp}$ is the dual code of $C$. Self-dual codes have connections with combinatorics and lattice theory, along with applications in cryptography (see \cite{Conway1999,Cramer2005,Dougherty2008, Macwilliams1977}), while self-orthogonal codes are particularly significant in quantum coding theory (for instance, see \cite{Calderbank1995,Shor1995}). The study of self-orthogonal and self-dual codes has thus become important both for advancing theoretical understanding and for enabling practical applications in classical and quantum coding. Recently, considerable attention has been given to the construction of self-orthogonal and self-dual MDS and NMDS codes, resulting in new families of codes with desirable properties and enhanced applicability (see \cite{Han2024,Li2024} and references therein). These developments not only enrich the classification of MDS and NMDS codes but also provide valuable tools for secure communication, reliable data storage, and quantum information processing.

Building on these previous developments, in this paper, we propose two new families of linear codes that contain non-GRS MDS codes,  namely $C_{i,j}$, $1\leq i<j\leq k$, and $C_{h,k}$.  Our contributions are as follows. We determine the parameters of $C_{i,j}$ and $C_{h,k}$ and establish necessary and sufficient conditions for these codes to be MDS. We explicitly determine parity-check matrices for $C_{i,j}$ and $C_{h,k}$.  We also investigate their non-GRS properties, using the Schur square method. Furthermore, we provide several explicit examples and constructions to illustrate and validate our results.

The rest of the paper is organized as follows. In Section \ref{pre}, we recall basic definitions and introduce the notation used throughout the paper. Sections \ref{firstfam} and \ref{secondfam} are devoted to the construction of two families of linear codes, where we study their MDS and  non-GRS properties, and explicitly derive their parity-check matrices. The theory of sysmmetric functions \cite{Macdonald1995} is very useful in our investigations. Furthermore, we investigate the self-orthogonal and self-dual properties of these codes in respective sections. In Section \ref{conclusion}, we conclude the paper and discuss potential directions for future work.

\section{Preliminaries}\label{pre}
Throughout the paper, we denote $\mathbb{F}_q$, a finite field with $q$ elements, where $q$ is a prime power, and $\mathbb{F}_q^*=\mathbb{F}_q\setminus\{0\}$. A $k\times n$ matrix $G$ is called a generator matrix for a linear code $C$ with parameters $[n,k]$ if the rows of $G$ form a basis of $C$. Two codes $C_1,C_2\subseteq \mathbb{F}_q^n$ are said to be (monomial) equivalent if $C_1$ can be obtained from $C_2$ by permutation of coordinates and multiplying coordinates by a nonzero scalar. The dual of $C$ is defined as
$$C^{\perp}=\{y\in \mathbb{F}_q^n: \langle y,c\rangle=0, \forall c\in C\},$$
where $\langle y,c\rangle=\sum_{i=1}^ny_ic_i$ is the Euclidean inner product. The code $C$ is called self-orthogonal if $C\subseteq C^{\perp}$ and called self-dual if $C=C^{\perp}$. Moreover, if the code $C$ is self-dual, then $n=2k$. 
The Singleton bound states that $d \le n-k+1$, where $d$ is the Hamming distance of $C$, defined as
$$d(C)=\min_{0\neq c\in C}\{wt(c): wt(c)=\text{number of nonzero coordinates in } c\}.$$
The code $C$ is said to be an MDS code if $d=n-k+1$. 
 A well-known class of MDS codes is generalized Reed-Solomon (GRS) codes defined as follows.
\begin{definition}
For $n\leq q$, let $\Lambda=\{\alpha_1,\alpha_2,\dots, \alpha_n\}\subseteq \mathbb{F}_q$ and $v=(v_1,v_2,\dots,v_n)\in (\mathbb{F}_q^*)^n$. The generalized Reed-Solomon (GRS) code is defined as 
$$GRS(n,k,\Lambda,v)=\{(v_1f(\alpha_1),v_2f(\alpha_2),\dots,v_nf(\alpha_n)) \mid  f(x)\in \mathbb{F}_q[x], \deg(f(x))\le k-1\}.$$ 
\end{definition}
In particular, when $v=(1,1,\dots,1)$, the code is called a Reed-Solomon (RS) code denoted by $RS(n,k,\Lambda)$. Note that the code $GRS(n,k,\Lambda,v)$ and $RS(n,k,\Lambda)$ are equivalent.  A generator matrix for $GRS(n,k,\Lambda,v)$ code is 
$$G=\begin{bmatrix}
    v_1&v_2& \dots &v_n\\
    v_1\alpha_1& v_2\alpha_2& \dots&v_n\alpha_n\\
    v_1\alpha_1^2& v_2\alpha_2^2& \dots&v_n\alpha_n^2\\
    \vdots &\vdots &\dots&\vdots\\
    v_1\alpha_1^{k-1}& v_2\alpha_2^{k-1}& \dots&v_n\alpha_n^{k-1}
\end{bmatrix}_{k\times n}.$$

\begin{definition}\label{nongrs} 
A non-GRS MDS code is an MDS code that is not equivalent to a GRS code. 
\end{definition}
The Schur (Hadamard) product (for details, see \cite{horn2013}) of two vectors $x=(x_1,x_2,\dots,x_n)$ and $y=(y_1,y_2,\dots,y_n)$ in $\mathbb{F}_q^n$ is defined as 
$$x\star y=(x_1y_1,x_2y_2,\dots,x_ny_n).$$
Let $C_1, C_2$ be codes with parameters $[n,k]$. Then the Schur product $C_1\star C_2$ is defined as
$$C_1\star C_2=\text{span}_{\mathbb{F}_q}\{c_1\star c_2: c_1\in C_1, c_2\in C_2\}.$$

If $C_1=C_2$, then the Schur product is called the Schur square. The following lemma provides a characterization of GRS codes using the Schur square.
\begin{lemma}\label{lemmanongrs}
(1) 
If $C$ is an $[n,k]$ GRS code with $\dim(C)\leq n/2$, then $\dim(C\star C)=2\dim(C)-1$.\\
(2) \cite{Mirandola2015} Let $C$ be an MDS code of length $n$ with $\dim(C)\leq \frac{n-1}{2}$. Then $C$ is a GRS code if and only if $\dim(C\star C)=2\dim(C)-1$.   
\end{lemma}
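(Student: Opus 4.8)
The plan is to prove part (1) by a direct computation that identifies the Schur square of a GRS code as another GRS code, and then to obtain part (2) by combining part (1) (for the ``only if'' direction) with the structural characterization of \cite{Mirandola2015} (for the ``if'' direction, which I would cite rather than reprove).

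For part (1), I would start from the monomial basis of $C = GRS(n,k,\Lambda,v)$ given by the rows $g_i = (v_1\alpha_1^i,\dots,v_n\alpha_n^i)$, $0\le i\le k-1$, of the displayed generator matrix. The Schur product of two basis rows is $g_i\star g_j = (v_1^2\alpha_1^{i+j},\dots,v_n^2\alpha_n^{i+j})$. Since the exponent set $\{i+j:0\le i,j\le k-1\}$ is exactly $\{0,1,\dots,2k-2\}$, the span of all such products coincides with the span of the vectors $(v_1^2\alpha_1^m,\dots,v_n^2\alpha_n^m)$ for $0\le m\le 2k-2$. Hence $C\star C = GRS(n,2k-1,\Lambda,v\star v)$, where $v\star v=(v_1^2,\dots,v_n^2)\in(\mathbb{F}_q^*)^n$.

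It then remains to verify that this code has dimension exactly $2k-1$, and this is the only delicate point in part (1). The hypothesis $\dim(C)\le n/2$ gives $2k-1\le n-1<n$, so a nonzero polynomial of degree at most $2k-2$ has strictly fewer than $n$ roots and therefore cannot vanish at all $n$ distinct evaluation points $\alpha_1,\dots,\alpha_n$. By the standard Vandermonde argument the $2k-1$ generating vectors are linearly independent, so $\dim(C\star C)=2k-1=2\dim(C)-1$. This is precisely where the bound $\dim(C)\le n/2$ is essential: without it the parameter $2k-1$ could exceed $n$ and the dimension would saturate at $n$ rather than at $2k-1$.

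For part (2), the ``only if'' direction is immediate, since a GRS code with $\dim(C)\le\frac{n-1}{2}\le n/2$ satisfies $\dim(C\star C)=2\dim(C)-1$ by part (1). The converse, that an MDS code whose Schur square attains the minimal possible dimension $2\dim(C)-1$ must in fact be a GRS code, is the genuinely hard content and the main obstacle; I would invoke it directly from \cite{Mirandola2015} rather than reprove it, as its proof rests on reconstructing the underlying evaluation points and column multipliers from the geometry of the Schur square, which lies well beyond the elementary computation that suffices for part (1).
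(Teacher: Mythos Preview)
Your argument is correct. The paper itself does not supply a proof of this lemma: part (1) is stated as a well-known fact and part (2) is attributed to \cite{Mirandola2015}, so there is nothing to compare against beyond noting that your direct Vandermonde computation for part (1) is exactly the standard one, and your decision to cite rather than reprove the converse in part (2) matches the paper's own treatment.
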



The following lemma is a well known characterization of MDS codes in terms of generator matrices. 
\begin{lemma}\cite[p.319]{Macwilliams1977}\label{mdsnonsingular}
 Let $C$ be an $[n,k]$ code with a generator matrix $G$. Then $C$ is MDS if and only if all $k\times k$ sub-matrices of $G$ are non-singular.   
\end{lemma}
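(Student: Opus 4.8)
The plan is to argue directly in terms of the minimum weight of $C$, using the linearity of the code and the standard fact that $d(C)$ equals the minimum Hamming weight among nonzero codewords. I would parametrize codewords as $c=xG$ with $x\in\mathbb{F}_q^k\setminus\{0\}$, writing $G_j\in\mathbb{F}_q^k$ for the $j$-th column of $G$ and $G_T$ for the $k\times|T|$ submatrix on a column set $T\subseteq\{1,\dots,n\}$. The central observation is that the $j$-th coordinate of $c$ vanishes exactly when $\langle x,G_j\rangle=0$; hence the number of zero coordinates of $c$ equals the number of columns annihilated by $x$, and minimizing the weight of $c$ amounts to maximizing the number of columns a single nonzero $x$ can annihilate. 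Note also that $x\mapsto xG$ is injective since the rows of $G$ are independent, so every nonzero $x$ gives a genuine nonzero codeword.

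First I would introduce $z=\max\{|T| : \operatorname{rank}(G_T)<k\}$ and prove the exact identity $d(C)=n-z$ by a two-sided comparison. For one inequality, take a minimum-weight codeword $c=xG$ and let $T_0=\{j : c_j=0\}$ be its zero set of size $n-d$; then $x^{\top}G_{T_0}=0$ with $x\neq 0$, so $\operatorname{rank}(G_{T_0})\le k-1<k$, giving $z\ge n-d$, i.e.\ $d\ge n-z$. For the reverse, pick $T$ with $|T|=z$ and $\operatorname{rank}(G_T)<k$; the left null space of $G_T$ is then nontrivial, and any nonzero $x$ in it produces a codeword $xG\neq 0$ vanishing on all of $T$, hence of weight at most $n-|T|=n-z$, giving $d\le n-z$. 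Combining the two bounds yields $d=n-z$.

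Next I would evaluate $z$ under each hypothesis. Any set of $k-1$ columns has rank at most $k-1<k$, so $z\ge k-1$ always, which recovers the Singleton bound $d\le n-k+1$. If every $k\times k$ submatrix of $G$ is nonsingular, then every set of $k$ columns has full rank $k$, and any set of size at least $k$ contains such a full-rank $k$-subset; hence no set of size $\ge k$ can have rank below $k$, forcing $z=k-1$ and $d=n-k+1$, so $C$ is MDS. Conversely, if some $k\times k$ submatrix is singular, the corresponding $k$-column set $T$ satisfies $\operatorname{rank}(G_T)<k$, so $z\ge k$ and therefore $d\le n-k<n-k+1$, so $C$ is not MDS. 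Reading each implication contrapositively closes the equivalence.

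The argument is essentially routine once the zero-set/column-rank dictionary is established, so I expect no deep obstacle. The one point that genuinely needs care is proving the exact identity $d=n-z$ rather than a one-sided bound, which is why I would spell out both inclusions; the accompanying subtlety is to verify that a nonzero vector in the left null space of $G_T$ really yields an admissible (nonzero) codeword, which is exactly where the injectivity of $x\mapsto xG$ enters.
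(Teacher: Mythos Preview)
Your argument is correct. The identity $d(C)=n-z$ via the zero-set/column-rank dictionary is the standard route, and both directions of the equivalence follow cleanly from it; the injectivity check for $x\mapsto xG$ is exactly the right point to flag.

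As for comparison: the paper does not actually prove this lemma. It is quoted as a well-known fact with a citation to MacWilliams--Sloane and used as a black box in the proofs of Theorems~\ref{thmmds2} and~\ref{thmmdshk}. So there is nothing to compare against; your write-up supplies a self-contained proof where the paper simply invokes the reference.
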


Let $\Lambda=\{\alpha_1,\alpha_2,\dots,\alpha_n\}\subseteq \mathbb{F}_q$. Let $M_{\Lambda}$ be a square Vandermonde matrix of order $n$ defined by 
$$M_{\Lambda}=\begin{bmatrix}
    1&1&\dots&1\\
    \alpha_1&\alpha_2&\dots&\alpha_n\\
    \vdots&\vdots&\vdots&\vdots\\
    \alpha_1^{n-2}&\alpha_2^{n-2}&\dots&\alpha_n^{n-2}\\
    \alpha_1^{n-1}&\alpha_2^{n-1}&\dots&\alpha_n^{n-1}
\end{bmatrix}.$$
It is well known that the determinant of $M_\Lambda$ is  $\det(M_{\Lambda})=\prod_{1\leq i< j\leq n}(\alpha_j-\alpha_i)=D(\alpha_1,\dots,\alpha_n)$. 
Now, we fix some notation that will be used throughout the paper. For an integer $t\geq 0$, the $t^{\rm{th}}$ elementary symmetric function in $n$ variables is defined as
$$\sigma_t(x_1,x_2,\dots, x_n)=\sum_{1\leq j_1<j_1<\cdots<j_t\leq n}x_{i_{j_1}}x_{i_{j_2}}\cdots x_{i_{j_t}}$$
with $\sigma_0(x_1,x_2,\dots,x_n)=1$ and $\sigma_t(x_1,x_2,\dots, x_n)=0$ for all $t>n$.

Let $S_t(x_1,x_2,\dots,x_n)$ be the complete symmetric polynomial of degree $t$ defined by 
$$S_t(x_1,x_2,\dots,x_n)=\sum_{t_1+t_2+\cdots+t_n=t,\ t_i\geq 0}x_1^{t_1}x_2^{t_2}\cdots x_n^{t_n}.$$ 
Let $\Lambda=\{\alpha_1,\alpha_2\dots,\alpha_n\}\subseteq \mathbb{F}_q$.  By Cramer's rule \cite{cramer1750} , the solution of the system of linear equations 
$$\begin{bmatrix}
    1&1& \dots &1\\
    \alpha_1& \alpha_2& \dots&\alpha_n\\
    \alpha_1^2& \alpha_2^2& \dots&\alpha_n^2\\
    \vdots &\vdots &\dots&\vdots\\
    
    \alpha_1^{n-2}& \alpha_2^{n-2}& \dots&\alpha_n^{n-2}\\
    \alpha_1^{n-1}& \alpha_2^{n-1}& \dots&\alpha_n^{n-1}\\
\end{bmatrix}\begin{bmatrix}
    x_1\\
    x_2\\
    x_3\\
    \vdots\\
    x_{n-1}\\
      x_{n}\\
\end{bmatrix}=\begin{bmatrix}
    0\\
    0\\
    0\\
    \vdots\\
    0 \\
    1
\end{bmatrix}$$
is given by 
\begin{equation}\label{ui}
u_i=\prod_{1\leq j\leq n, j\neq i}(\alpha_i-\alpha_j)^{-1}, \ 1\leq i\leq n.
\end{equation}
 
 We denote $S_t(\alpha_1,\alpha_2,\dots,\alpha_n)$ by $S_t$ with $S_0=1$ and $\sigma_t(\alpha_1,\alpha_2,\dots,\alpha_n)$ by $\sigma_t$. For uniformity, we write $S_t=0$ for $t<0$. There is a fundamental relation between the elementary symmetric functions and the complete symmetric functions (for instance, see \cite[Page 21, Eq. $2.6'$]{Macdonald1995})
\begin{equation}\label{eqA}
    \sum_{t=0}^N(-1)^{t}\sigma_tS_{N-t}=0, \ \ \  \text{for all } N\geq 1.
\end{equation}




Let $n$ and $h$ be positive integers. Define a generalized Vandermonde matrix $G_h$ as 
$$G_h=\begin{bmatrix}
    1&1&\dots&1\\
    \alpha_1&\alpha_2&\dots&\alpha_n\\
    \vdots&\vdots&\vdots&\vdots\\
    \alpha_1^{n-2}&\alpha_2^{n-2}&\dots&\alpha_n^{n-2}\\
    \alpha_1^h&\alpha_2^h&\dots&\alpha_n^h
\end{bmatrix}.$$
The following lemma determines the determinant of the matrix $G_h$.
\begin{lemma}\cite[Proposition 3.1]{Marchi2001}\label{detgh}
With notations as above, for $h\geq n-1$, we have
  $$\det(G_h)=S_{h-n+1}(\alpha_1,\alpha_2,\dots,\alpha_n)\cdot\det(M_{\Lambda}).$$   
\end{lemma}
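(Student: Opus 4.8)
The plan is to reduce the generalized Vandermonde determinant to the ordinary one by a cofactor expansion along the exceptional last row, and then to identify the resulting scalar factor as $S_{h-n+1}$ using precisely the two tools assembled in the Preliminaries: the coefficients $u_i$ of \eqref{ui} and the Newton-type identity \eqref{eqA}. Conceptually this is the single-row case of the classical bialternant (Schur-polynomial) formula $a_{\lambda+\delta}/a_\delta=s_\lambda$, in which the Schur function attached to the one-part partition $\lambda=(h-n+1)$ is exactly the complete homogeneous symmetric function $S_{h-n+1}$; but I prefer a self-contained argument built from the quantities already at hand.

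First I would expand $\det(G_h)$ along its last row $(\alpha_1^h,\dots,\alpha_n^h)$. Deleting the last row and the $j$-th column of $G_h$ leaves the Vandermonde minor $D_j:=\prod_{1\le k<l\le n,\ k,l\neq j}(\alpha_l-\alpha_k)$ built from the top $n-1$ rows, which are common to $G_h$ and $M_{\Lambda}$. Hence
\[
\det(G_h)=\sum_{j=1}^n(-1)^{n+j}\alpha_j^h\,D_j.
\]
The same cofactor appears when one solves the system defining \eqref{ui} by Cramer's rule: replacing the $j$-th column of $M_{\Lambda}$ by $(0,\dots,0,1)^{\top}$ and expanding along that column gives $u_j=(-1)^{n+j}D_j/\det(M_{\Lambda})$, so that $(-1)^{n+j}D_j=u_j\det(M_{\Lambda})$. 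Substituting yields the clean factorization
\[
\det(G_h)=\det(M_{\Lambda})\sum_{j=1}^n u_j\alpha_j^h ,
\]
and the lemma is now equivalent to the symmetric-function identity $\sum_{j=1}^n u_j\alpha_j^h=S_{h-n+1}$.

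To prove that identity I would set $P_h:=\sum_{j=1}^n u_j\alpha_j^h$ and show that $P_h$ and $S_{h-n+1}$ obey the same linear recurrence with the same initial data. For the initial data, the defining system of \eqref{ui} says exactly that the $m$-th power-row of $M_{\Lambda}$ applied to $(u_1,\dots,u_n)$ reproduces the right-hand side $(0,\dots,0,1)^{\top}$; that is, $P_m=0$ for $0\le m\le n-2$ and $P_{n-1}=1$, which matches $S_{m-n+1}$ for $0\le m\le n-1$. For the recurrence, each $\alpha_j$ is a root of $f(x)=\prod_{l=1}^n(x-\alpha_l)=\sum_{t=0}^n(-1)^t\sigma_t x^{n-t}$; multiplying $f(\alpha_j)=0$ by $\alpha_j^{h-n}$, then by $u_j$, and summing over $j$ gives $\sum_{t=0}^n(-1)^t\sigma_t P_{h-t}=0$ for every $h\ge n$. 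The sequence $s_h:=S_{h-n+1}$ satisfies the identical relation: putting $N=h-n+1\ge 1$ turns $\sum_{t=0}^n(-1)^t\sigma_t s_{h-t}$ into $\sum_{t=0}^n(-1)^t\sigma_t S_{N-t}$, which vanishes by \eqref{eqA} (the terms with $t>N$ drop out because $S_{N-t}=0$ for negative index, and those with $t>n$ drop out because $\sigma_t=0$). Since the leading coefficient $\sigma_0=1$, the recurrence determines every $P_h$ with $h\ge n$ from the previous $n$ values, so $P_h=S_{h-n+1}$ for all $h\ge 0$ by induction, which completes the proof.

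I expect the cofactor and Cramer bookkeeping in the first two paragraphs to be routine; the genuine content, and the main place to be careful, is the index matching in the last paragraph: aligning the summation range in the recurrence for $P_h$ with the shifted Newton identity \eqref{eqA}, and verifying that the vanishing of $\sigma_t$ for $t>n$ and of $S_t$ for $t<0$ makes the two recurrences literally coincide.
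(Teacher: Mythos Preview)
The paper does not supply its own proof of this lemma; it is quoted from \cite{Marchi2001} without argument. Your proof is correct and entirely self-contained within the paper's notation.

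It is worth pointing out that your argument actually proves more than the stated lemma: the cofactor-expansion step yields the factorization $\det(G_h)=\det(M_{\Lambda})\sum_{j=1}^n u_j\alpha_j^h$, which reduces Lemma~\ref{detgh} to the identity $\sum_{j=1}^n u_j\alpha_j^h=S_{h-n+1}$, and your recurrence argument then establishes that identity directly from \eqref{ui} and \eqref{eqA}. But this identity is exactly the content of Lemma~\ref{paritylemma}, the very next result in the Preliminaries, which is also cited without proof. So you have in effect given a unified elementary proof of both lemmas, and shown that Lemma~\ref{detgh} is essentially equivalent to Lemma~\ref{paritylemma} once one unwinds Cramer's rule. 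The bialternant/Jacobi--Trudi viewpoint you allude to is the route the paper takes later in Proposition~\ref{detk1k2} for the two-row variant; your approach here is more direct for the single-row case and avoids importing the Schur-function machinery.
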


\begin{lemma}\cite[Lemma 2.6]{Abdukhalikov2025nonrs}\label{paritylemma}
    Let $\Lambda=\{\alpha_1,\alpha_2,\dots,\alpha_n\}\subseteq \mathbb{F}_q$ and $u_i=\prod_{1\leq j\leq n, j\neq i}(\alpha_i-\alpha_j)^{-1}$ for $1\leq i\leq n$. Then 
$$\sum_{i=1}^nu_i\alpha_i^h=\begin{cases}
  0 &\text{ if } 0\leq h\leq n-2\\
  S_{h-n+1}(\alpha_1,\alpha_2,\dots,\alpha_n) &\text{ if }  h\geq n-1
\end{cases}.$$
\end{lemma}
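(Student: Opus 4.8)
The plan is to package the quantities $T_h := \sum_{i=1}^n u_i\alpha_i^h$ into a single ordinary generating function and to recognize it as the generating function of the complete symmetric polynomials, shifted by a factor $z^{n-1}$. Working in the formal power series ring $\mathbb{F}_q[[z]]$, I would set
\[
\Phi(z)=\sum_{h\ge 0}T_h\,z^h=\sum_{i=1}^n u_i\sum_{h\ge 0}(\alpha_i z)^h=\sum_{i=1}^n\frac{u_i}{1-\alpha_i z},
\]
which is legitimate since each $1-\alpha_i z$ is a unit (constant term $1$). The target is the rational identity $\Phi(z)=z^{n-1}\big/\prod_{j=1}^n(1-\alpha_j z)$; once this is in hand, the standard generating function $\prod_{j=1}^n(1-\alpha_j z)^{-1}=\sum_{m\ge 0}S_m z^m$, which is exactly relation \eqref{eqA} repackaged, gives $\Phi(z)=\sum_{m\ge 0}S_m z^{m+n-1}$. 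Reading off the coefficient of $z^h$ then yields $T_h=0$ for $0\le h\le n-2$ and $T_h=S_{h-n+1}$ for $h\ge n-1$, which is precisely the claim.

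The heart of the argument is the rational identity, and I would reduce it to a clean polynomial identity by clearing denominators. Multiplying through by $\prod_{j=1}^n(1-\alpha_j z)$ reduces the goal to
\[
\sum_{i=1}^n u_i\prod_{j\ne i}(1-\alpha_j z)=z^{n-1},
\]
both sides being polynomials of degree at most $n-1$. Using $1-\alpha_j z=z(z^{-1}-\alpha_j)$ on each of the $n-1$ factors, the left-hand side equals $z^{n-1}R(z^{-1})$ with $R(x):=\sum_{i=1}^n u_i\prod_{j\ne i}(x-\alpha_j)$, so it suffices to prove $R(x)\equiv 1$. Since $\deg R\le n-1$, it is enough to check $R$ at the $n$ distinct points $\alpha_1,\dots,\alpha_n$: evaluating at $\alpha_k$, every term with $i\ne k$ contains the factor $(\alpha_k-\alpha_k)=0$, so only $i=k$ survives and $R(\alpha_k)=u_k\prod_{j\ne k}(\alpha_k-\alpha_j)=1$ by the definition $u_k=\prod_{j\ne k}(\alpha_k-\alpha_j)^{-1}$. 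Hence $R\equiv 1$ and the polynomial identity follows.

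I expect the only real care to be bookkeeping rather than depth: one must handle the convention $S_m=0$ for $m<0$ correctly when extracting coefficients, and note that distinctness of the $\alpha_i$ (guaranteed since $\Lambda$ is a set) is what makes both $u_i$ well defined and the $n$-point interpolation argument valid; the case $\alpha_i=0$ causes no trouble because all manipulations take place in $\mathbb{F}_q[[z]]$. As an alternative that leans directly on the tools already set up, one could argue by a recurrence: the rows of the Vandermonde system defining the $u_i$ give $T_t=0$ for $0\le t\le n-2$ and $T_{n-1}=1$ at once, while $p(\alpha_i)=0$ for $p(x)=\prod_k(x-\alpha_k)=\sum_{t=0}^n(-1)^t\sigma_t x^{n-t}$ yields the order-$n$ recurrence $T_h=\sum_{t=1}^n(-1)^{t+1}\sigma_t T_{h-t}$ for $h\ge n$; since \eqref{eqA} shows $S_{h-n+1}$ obeys the same recurrence with the same initial values, induction on $h$ finishes. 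The main obstacle in this second route is aligning the index shift $N=h-n+1$ with the range of validity of \eqref{eqA}, where the boundary cases $n\le h\le 2n-2$ must be absorbed using $S_{m}=0$ for $m<0$.
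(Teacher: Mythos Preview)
Your argument is correct. The generating-function route is clean: the key identity $\sum_{i=1}^n u_i\prod_{j\ne i}(1-\alpha_j z)=z^{n-1}$ follows exactly as you say from the Lagrange-type observation that $R(x)=\sum_i u_i\prod_{j\ne i}(x-\alpha_j)$ has degree at most $n-1$ and takes the value $1$ at the $n$ distinct nodes $\alpha_k$, hence $R\equiv 1$. Your alternative recurrence proof is also sound, and the index-alignment issue you flag is genuine but harmless once the convention $S_m=0$ for $m<0$ is in force.

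As for comparison with the paper: there is nothing to compare. The paper does not prove this lemma; it is quoted verbatim from \cite[Lemma~2.6]{Abdukhalikov2025nonrs} and used as a black box throughout. The surrounding text does set up the Cramer's-rule characterization of the $u_i$ (which immediately gives the cases $0\le h\le n-1$) and states Lemma~\ref{detgh} on the determinant of $G_h$, so a reader could reconstruct the $h\ge n$ case from those ingredients, but no such argument is written out. Your proof therefore supplies what the paper omits.
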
 
Throughout the paper, we fix $\Lambda=\{\alpha_1,\alpha_2,\dots,\alpha_n\}\subseteq \mathbb{F}_q$, $S_t=S_t(\alpha_1,\alpha_2,\dots,\alpha_n)$, $\sigma_t=\sigma_t(\alpha_1,\alpha_2,\dots,\alpha_n)$, $u_i$'s as in Eq. \ref{ui}, $D(\alpha_1,\alpha_2,\dots,\alpha_n)=\det(M_{\Lambda})$ unless specified otherwise.

\section{The first family of linear codes}\label{firstfam}
 Let  $n,k$ be positive integers with $3\leq k\leq n-2\leq q-2$, $\Lambda=\{\alpha_1,\alpha_2,\dots,\alpha_n\}\subseteq\mathbb{F}_q$, and $1\leq i<j\leq k$.
 Define the matrix $G_{i,j}(\Lambda)$ by removing the $(i+1)$-th and $(j+1)$-th rows from the generator matrix of $RS(n,k+2,\Lambda)$: 
    $$G_{i,j}(\Lambda)=\begin{bmatrix}
    1 &1& \dots&1\\
    \alpha_1&\alpha_2 & \dots&\alpha_n\\
    \vdots&\vdots&\vdots&\vdots\\
    \alpha_1^{i-1}&\alpha_2^{i-1} & \dots&\alpha_n^{i-1}\\
    \alpha_1^{i+1}&\alpha_2^{i+1} & \dots&\alpha_n^{i+1}\\
    \vdots&\vdots&\vdots&\vdots\\
    \alpha_1^{j-1}&\alpha_2^{j-1} & \dots&\alpha_n^{j-1}\\
    \alpha_1^{j+1}&\alpha_2^{j+1} & \dots&\alpha_n^{j+1}\\
    \vdots&\vdots&\vdots&\vdots\\
    \alpha_1^{k}&\alpha_2^{k} & \dots&\alpha_n^{k}\\
    \alpha_1^{k+1}&\alpha_2^{k+1} & \dots&\alpha_n^{k+1}\\
\end{bmatrix}_{k\times n}.$$
We denote the linear code generated by $G_{i,j}(\Lambda)$ by $C_{i,j}$.
In this section, we give parameters of $C_{i,j}$ and determine a parity check matrix for $C_{i,j}$. We present necessary and sufficient conditions for $C_{i,j}$ to be MDS and investigate the non-GRS property.
\begin{theorem}
    The linear code $C_{i,j}$ has parameters $[n,k,d]$ with $d\in\{n-k-1,n-k,n-k+1\}$.
\end{theorem}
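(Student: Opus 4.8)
The plan is to realize $C_{k-1,k-2}$ as an evaluation code and read off both the dimension and the distance bounds from elementary degree considerations, rather than through determinant computations. First I would observe that every codeword has the form $(f(\alpha_1),\dots,f(\alpha_n))$, where $f$ ranges over the $k$-dimensional space
$$V=\mathrm{span}_{\mathbb{F}_q}\{1,x,\dots,x^{k-3},x^{k},x^{k+1}\},$$
that is, the space of polynomials of degree at most $k+1$ whose coefficients of $x^{k-2}$ and $x^{k-1}$ both vanish. Setting up the evaluation map $\mathrm{ev}\colon V\to\mathbb{F}_q^n$, $f\mapsto(f(\alpha_1),\dots,f(\alpha_n))$, we have $C_{k-1,k-2}=\mathrm{ev}(V)$, and the $k$ basis polynomials of $V$ correspond exactly to the $k$ rows of $G_{k-1,k-2}$.

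For the dimension, I would use that $\deg f\le k+1$ together with the hypothesis $k\le n-2$, which gives $k+1\le n-1<n$. Hence any nonzero $f\in V$ has at most $k+1<n$ roots and cannot vanish at all $n$ distinct points $\alpha_1,\dots,\alpha_n$. Therefore $\mathrm{ev}$ is injective, $\dim C_{k-1,k-2}=\dim V=k$, and a codeword is nonzero precisely when its underlying polynomial $f$ is nonzero.

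For the distance, note that for nonzero $f\in V$ the weight of $\mathrm{ev}(f)$ equals $n$ minus the number of $\alpha_i$ that are roots of $f$. Since a nonzero $f$ of degree at most $k+1$ has at most $k+1$ roots, every nonzero codeword has weight at least $n-(k+1)=n-k-1$, so $d\ge n-k-1$. Combined with the Singleton bound $d\le n-k+1$, this yields $d\in\{n-k-1,\,n-k,\,n-k+1\}$, completing the argument.

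The argument is short, and the only point requiring care is the injectivity step: this is exactly where the hypothesis $k\le n-2$ (equivalently $k+1<n$) is used, since without it the evaluation map could acquire a kernel and the dimension could drop below $k$. An alternative route through Lemma \ref{mdsnonsingular} and the generalized Vandermonde determinant of Lemma \ref{detgh} is also available, but it forces one to evaluate the relevant $k\times k$ minors, whose exponent profile $\{0,1,\dots,k-3,k,k+1\}$ reduces (after subtracting the staircase) to the Schur function $s_{(2,2)}$ times a Vandermonde factor; that computation is the genuinely harder obstacle and is better reserved for the subsequent MDS and non-GRS analysis rather than this basic parameter statement.
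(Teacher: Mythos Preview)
Your proof is correct and is essentially the same as the paper's: the paper phrases the lower bound as ``$C_{k-1,k-2}$ is a subcode of $RS(n,k+2,\Lambda)$, hence $d\ge n-k-1$,'' which is exactly your root-counting argument that a nonzero $f\in V$ has degree at most $k+1$ and hence at most $k+1$ zeros among the $\alpha_i$. Your treatment of the dimension via injectivity of the evaluation map simply makes explicit what the paper records in one line (``since $n>k+1$''), and both finish with the Singleton bound.
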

\begin{proof}
   It is easy to see that $C_{i,j}$ is an $[n,k]$ code, since $n>k+1$. Note that $C_{i,j}$ is a subcode of $RS(n,k+2,\Lambda)$. Therefore, $d(C_{i,j})\geq d(RS(n,k+2,\Lambda))=n-(k+2)+1=n-k-1$. Also, by Singleton bound, $d(C_{i,j})\leq n-k+1$. Thus $d(C_{i,j})\in \{n-k-1,n-k,n-k+1\}$.
\end{proof}

\subsection{Parity check matrix for code $C_{i,j}$ }
In this subsection, we determine a parity check matrix for code $C_{i,j}$. For $1\leq m\leq n$ define 
\begin{align*}
    \Gamma_{i,m}=&\sum_{t=0}^{k-i+1}(-1)^{t}\sigma_t \alpha_m^{n-i-t-1}\\
    =& \alpha_m^{n-i-1}-\sigma_1\alpha_m^{n-i-2}+\cdots+ (-1)^{k-i+1}\sigma_{k-i+1}\alpha_m^{n-k-2}.
\end{align*}

\begin{lemma}\label{lemma-ij-parity}
Let $R_i=(u_1\Gamma_{i,1},u_2\Gamma_{i,2},\dots,u_n\Gamma_{i,n})$ with \textcolor{blue}{$1\leq i\leq k$} and  
   $\Lambda^s=(\alpha_1^s,\alpha_2^s,\dots,\alpha_n^s)$. Then for $1\leq s\leq k+1$, $s\neq i$, we have  
   \begin{align*}
      R_i\cdot \Lambda^s= R_i (\Lambda^s)^T=0.
   \end{align*}
\end{lemma}
\begin{proof}
Observe that 
\begin{align*}
      R_i\cdot \Lambda^s=& \sum_{m=1}^n u_m\Gamma_{i,m}\alpha_m^s\\
      =& \sum_{m=1}^n u_m \left (  \sum_{t=0}^{k-i+1}(-1)^{t}\sigma_t \alpha_m^{n-i-t-1}  \right )\alpha_m^s\\
      =& \sum_{m=1}^n \left (  \sum_{t=0}^{k-i+1}(-1)^{t}\sigma_t u_m\alpha_m^{n-i-t-1+s}  \right )\\
      =&  \sum_{t=0}^{k-i+1} (-1)^t \sigma_t \left (  \sum_{m=1}^n u_m\alpha_m^{n-i-t-1+s}  \right )\\
   \text{ (by Lemma \ref{paritylemma})} \hspace{2 cm }   =&   \sum_{t=0}^{k-i+1} (-1)^t \sigma_t S_{(n-i-t-1+s)-(n-1)}\\
   =& \sum_{t=0}^{k-i+1} (-1)^t \sigma_t S_{s-i-t}\\
   =& \sum_{t=0}^{s-i} (-1)^t \sigma_t S_{(s-i)-t}+ \sum_{t=s-i+1}^{k-i+1} (-1)^t \sigma_t S_{(s-i)-t}\\
   =& \sum_{t=0}^{s-i} (-1)^t \sigma_t S_{(s-i)-t}+ 0  \end{align*}
since $(s-i)-t<0\ \ 
\forall t\geq (s-i)+1$.    
Thus, by Eq. (\ref{eqA}), $R_i\cdot \Lambda^s=\sum_{t=0}^{s-i} (-1)^t \sigma_t S_{(s-i)-t}=0$ for $s\neq i$. 
\end{proof}

\begin{theorem}
Let $C_{i,j}$ be the linear code generated by $G_{i,j}(\Lambda)$. Then 
    $$H_{i,j}=\begin{bmatrix}
         u_1&u_2&\dots&u_n\\
u_1\alpha_1&u_2\alpha_2&\dots&u_n\alpha_n\\
u_1\alpha_1^2&u_2\alpha_2^2&\dots&u_n\alpha_n^2\\
\vdots&\vdots&\vdots&\vdots\\
u_1\alpha_1^{n-k-3}&u_2\alpha_2^{n-k-3}&\dots&u_n\alpha_n^{n-k-3}\\
u_1\Gamma_{i,1}&u_2\Gamma_{i,2}&\dots&u_n\Gamma_{i,n}\\
u_1\Gamma_{j,1}&u_2\Gamma_{j,2} &\dots&u_n\Gamma_{j,n}
    \end{bmatrix}_{(n-k)\times n}$$
    is a parity check matrix for $C_{i,j}$.
\end{theorem}
\begin{proof}
Note that the first $(n-k-2)$ rows of $H_{i,j}$ are orthogonal to each row of $G_{i,j}(\Lambda)$ (by Lemma \ref{paritylemma}). Also, the last two rows of $H_{i,j}$ are orthogonal to each row of $G_{i,j}(\Lambda)$ (by Lemma \ref{lemma-ij-parity}). It is sufficient to show that the rows of $H_{i,j}$ are linearly independent.  Denote $R_i=(u_1\Gamma_{i,1},u_2\Gamma_{i,2},\dots,u_n\Gamma_{i,n})$, $R_j=(u_1\Gamma_{j,1},u_2\Gamma_{j,2},\dots,u_n\Gamma_{j,n})$, $h_l=(u_1\alpha_1^l,u_2\alpha_2^l,\dots,u_n\alpha_n^l)$, and  $\Lambda^s=(\alpha_1^s,\alpha_2^s,\dots,\alpha_n^s)$. Then rows of $H_{i,j}$ are $\{h_0,h_1,h_2,\dots,h_{n-k-3},R_i,R_j\}$. 
Note that 
for $0\leq l\leq n-k-3$, 
\begin{align*}
    \Lambda^i \cdot h_l=\sum_{m=1}^nu_m\alpha_m^{l+i}=0
\end{align*}
by Lemma \ref{paritylemma}, since $l+i< n-2$,
\begin{align*}
\Lambda^i\cdot R_i=&\sum_{m=1}^n \alpha_m^i   \left ( u_m \sum_{t=0}^{k-i+1}(-1)^{t}\sigma_t \alpha_m^{n-i-t-1} \right )  \\
=& \sum_{t=0}^{k-i+1}(-1)^{t}\sigma_t   \left (\sum_{m=1}^n u_m  \alpha_m^{n-i-t-1+i} \right ) \\
=& \sum_{t=0}^{k-i+1}(-1)^{t}\sigma_t   \left (\sum_{m=1}^n u_m  \alpha_m^{n-t-1} \right ) \\
=& \sum_{m=1}^n u_m  \alpha_m^{n-1} + \sum_{t=1}^{k-i+1}(-1)^{t}\sigma_t   \left (\sum_{m=1}^n u_m  \alpha_m^{n-t-1} \right ) \\
=& 1+0=1. \\
\end{align*}
Similarly, $\Lambda^j\cdot R_j =1$. 
Furthermore, 
\begin{align*}
    \Lambda^j\cdot R_i=&\sum_{m=1}^n\alpha_m^j  \left ( u_m \sum_{t=0}^{k-i+1}(-1)^{t}\sigma_t \alpha_m^{n-i-t-1} \right )\\
    =&\sum_{t=0}^{k-i+1}(-1)^{t}\sigma_t \left (\sum_{m=1}^n u_m \alpha_m^{n-i-t-1+j} \right )\\  =&\sum_{t=0}^{k-i+1}(-1)^{t}\sigma_t S_{j-i-t}\\ 
       =&\sum_{t=0}^{j-i}(-1)^{t}\sigma_t S_{j-i-t}+\sum_{t=j-i+1}^{k-i+1}(-1)^{t}\sigma_t S_{j-i-t}=0\\ 
\end{align*}
by Eq. (\ref{eqA}), since $j-i\geq 1$.
Now, we show that the rows of $H_{i,j}$ are linearly independent.
Suppose that there exist scalars $a_0,a_1,\dots,a_{n-k-3},$ $a_{n-k-2},a_{n-k-1}$ such that 
\begin{align*}
    a_0h_0+a_1h_1+\cdots+a_{n-k-3}h_{n-k-3}+a_{n-k-2}R_i+a_{n-k-1}R_j=0.
\end{align*}
Then, \begin{align*}
    &\Lambda^j\cdot (a_0h_0+a_1h_1+\cdots+a_{n-k-3}h_{n-k-3}+a_{n-k-2}R_i+a_{n-k-1}R_j)=0\\
    & \implies  a_{n-k-1}=0,
\end{align*}
and \begin{align*}
    &\Lambda^i\cdot (a_0h_0+a_1h_1+\dots+a_{n-k-3}h_{n-k-3}+a_{n-k-2}R_i+a_{n-k-1}R_j)=0\\
    & \implies  a_{n-k-2}=0.
\end{align*}
Therefore,  $a_0h_0+a_1h_1+\cdots+a_{n-k-3}h_{n-k-3}=0$. It is easy to see that rows $h_0,h_1,\dots,h_{n-k-3}$ are linearly independent. Hence $a_0=a_1=\cdots=a_{n-k-3}=0$. Thus, rows of $H_{i,j}$ are linearly independent.

\end{proof}

\subsection{MDS and non-GRS properties of code $C_{i,j}$ }
In this subsection, we establish the necessary and sufficient conditions under which the code $C_{i,j}$ is MDS and analyze the non-GRS property of this family. We also provide explicit constructions and examples of non-GRS MDS codes.
To derive the necessary and sufficient condition for MDSness, we begin with the following proposition, which serves as a key tool in the proof.

\begin{proposition}\label{det-ij}
Let $\Lambda=\{\alpha_1,\alpha_2,\dots,\alpha_n\}\subseteq\mathbb{F}_q$ and for $1\leq i<j\leq n-1$, $$V_{i,j}(\Lambda)=\begin{bmatrix}
    1 &1& \dots&1\\
    \alpha_1&\alpha_2 & \dots&\alpha_n\\
    \vdots&\vdots&\vdots&\vdots\\
    \alpha_1^{i-1}&\alpha_2^{i-1} & \dots&\alpha_n^{i-1}\\
    \alpha_1^{i+1}&\alpha_2^{i+1} & \dots&\alpha_n^{i+1}\\
    \vdots&\vdots&\vdots&\vdots\\
    \alpha_1^{j-1}&\alpha_2^{j-1} & \dots&\alpha_n^{j-1}\\
    \alpha_1^{j+1}&\alpha_2^{j+1} & \dots&\alpha_n^{j+1}\\
    \vdots&\vdots&\vdots&\vdots\\
    \alpha_1^{n}&\alpha_2^{n} & \dots&\alpha_n^{n}\\
    \alpha_1^{n+1}&\alpha_2^{n+1} & \dots&\alpha_n^{n+1}\\
\end{bmatrix}_{n\times n}.$$
Then
\begin{align*}
    &\det (V_{i,j}(\Lambda))=(-1)^{-(i+j-1)}D\cdot [\sigma_{n-i} \sigma_{n-j+1}-\sigma_{n-i+1}\sigma_{n-j}]
\end{align*}
where $D=D(\alpha_1,\alpha_2,\dots,\alpha_n)$ is the determinant of Vandermonde matrix $M_{\Lambda}$ and $\sigma_t=\sigma_t(\alpha_1,\alpha_2,\dots,\alpha_n)$.
\end{proposition}

\begin{proof}
  Let $$M_{\Lambda}(x,y)=\begin{bmatrix}
    1 &1& \dots&1 &1 &1\\
    \alpha_1&\alpha_2 & \dots&\alpha_n& x &y\\
    \vdots&\vdots&\vdots&\vdots &\vdots &\vdots\\
    \alpha_1^{i-1}&\alpha_2^{i-1} & \dots&\alpha_n^{i-1} & x^{i-1} & y^{i-1}\\
    \alpha_1^{i}&\alpha_2^{i} & \dots&\alpha_n^{i} &x^i &y^i\\
    \alpha_1^{i+1}&\alpha_2^{i+1} & \dots&\alpha_n^{i+1}& x^{i+1} & y^{i+1}\\
    \vdots&\vdots&\vdots&\vdots &\vdots &\vdots\\
    \alpha_1^{j-1}&\alpha_2^{j-1} & \dots&\alpha_n^{j-1}& x^{j-1} & y^{j-1}\\
    \alpha_1^{j}&\alpha_2^{j} & \dots&\alpha_n^{j}& x^{j} & y^{j}\\
    \alpha_1^{j+1}&\alpha_2^{j+1} & \dots&\alpha_n^{j+1}& x^{j+1} & y^{j+1}\\
    \vdots&\vdots&\vdots&\vdots&\vdots &\vdots\\
    \alpha_1^{n}&\alpha_2^{n} & \dots&\alpha_n^{n}& x^{n} & y^{n}\\
    \alpha_1^{n+1}&\alpha_2^{n+1} & \dots&\alpha_n^{n+1}& x^{n+1} & y^{n+1}\\
\end{bmatrix}_{(n+2)\times (n+2)}.$$  
Then by denoting $\alpha_{n+1}=x$, $\alpha_{n+2}=y$, we have
\begin{align*}
\det(M_{\Lambda}(x,y))=&\prod_{1\leq l<m\leq n+2}(\alpha_m-\alpha_l)\\ 
=&\prod_{1\leq l<m\leq n}(\alpha_m-\alpha_l)\cdot \prod_{1\leq l\leq n}(x-\alpha_l)\cdot\prod_{1\leq l\leq n}(y-\alpha_l)\cdot(y-x)\\
=& D \cdot (y-x) \cdot \prod_{1\leq l\leq n}(x-\alpha_l)\cdot\prod_{1\leq l\leq n}(y-\alpha_l)\\
=& D \cdot (y-x) \cdot \left (\sum_{l=0}^n(-1)^l\sigma_l x^{n-l}\right)\cdot \left (\sum_{l=0}^n(-1)^l\sigma_l y^{n-l}\right) \\
=& D \cdot (y-x) \cdot A(x) \cdot A(y) \ \ \ (say) \\
=& D \cdot (y-x) \cdot B(x,y), \\
\end{align*}
where $B(x,y)=A(x) \cdot A(y)$.
 Note that the coefficient of $x^ly^m$ in $B(x,y)$ is $(-1)^{2n-(l+m)}\sigma_{n-l}\sigma_{n-m}$. Therefore, the coefficient of $x^iy^j$ in $\det(M_{\Lambda}(x,y))$ is given by
 \begin{align*}
   &\text{coefficient of } x^iy^{j-1} \text{ in } B(x,y) - \text{coefficient of } x^{i-1}y^{j} \text{ in } B(x,y)\\&= D[ (-1)^{2n-(i+j-1)} \sigma _{n-i}\sigma_{n-j+1}-(-1)^{2n-(i-1+j)} \sigma _{n-i+1}\sigma_{n-j}]\\
   &=D(-1)^{2n-(i+j-1)} [\sigma _{n-i}\sigma_{n-j+1}- \sigma _{n-i+1}\sigma_{n-j}]\\
    &=D(-1)^{-(i+j-1)} [\sigma _{n-i}\sigma_{n-j+1}- \sigma _{n-i+1}\sigma_{n-j}].
 \end{align*}
Also, observe that $\det(V_{i,j}(\Lambda))$ is coefficient of $x^iy^j$ in $\det(M_{\Lambda}(x,y))$. This completes the proof.
\end{proof}

 \begin{theorem} \label{MDS-ij}
    The linear code $C_{i,j}$ generated by $G_{i,j}(\Lambda)$ is MDS if and only if
$$\sigma_{k-i}(\beta_1,\beta_2,\dots,\beta_k)\sigma_{k-j+1}(\beta_1,\beta_2,\dots,\beta_k)-\sigma_{k-i+1}(\beta_1,\beta_2,\dots,\beta_k)\sigma_{k-j}(\beta_1,\beta_2,\dots,\beta_k)\neq 0$$
 for all $\{\beta_1,\beta_2,\dots,\beta_k\}\subseteq \Lambda$ with distinct $\beta_i$.
\end{theorem}
\begin{proof}
    The proof follows from Lemma \ref{mdsnonsingular} and Proposition \ref{det-ij}.
\end{proof}
    
\begin{proposition}\label{k=34nongrs}
(a). Let $k=3$ , $k<(n-2)/2$, $q>9$ and code $C_{1,2}$ (resp.  $C_{2,3}$) be an MDS code. Then $C_{1,2}$ (resp.  $C_{2,3}$)  is a non-GRS MDS code.\\
(b). Let $k=4$, $k<(n-2)/2$, $q>11$ and code $C_{i,j}$ be an MDS code for $1\leq i<j\leq k$. Then code $C_{i,j}$ is a non-GRS MDS code.
\end{proposition}
\begin{proof}
(a). Let $k=3$. The Schur square of $C_{1,2}$ is generated by \begin{align*}
    T=\{(\alpha_1^l,\alpha_2^l,\dots,\alpha_n^l): l=0,3,4,6,7,8\}.
\end{align*}
The Schur square of $C_{2,3}$ is generated by \begin{align*}
    T=\{(\alpha_1^l,\alpha_2^l,\dots,\alpha_n^l): l=0,1,2,4,5,8\}.
\end{align*}
Also, $8< q-2$, thus $T$ is linearly independent. Hence, $\dim(C_{1,2}\star C_{1,2})=\dim(C_{2,3}\star C_{2,3})=2k$.\\
(b). Let $k=4$. 
The Schur square of $C_{1,2}$ is generated by \begin{align*}
    T=\{(\alpha_1^l,\alpha_2^l,\dots,\alpha_n^l): l=0,3,4,5,6,7,8,9,10\}.
\end{align*}

The Schur square of $C_{1,3}$ is generated by \begin{align*}
    T=\{(\alpha_1^l,\alpha_2^l,\dots,\alpha_n^l): i=0,2,4,5,6,7,8,9,10\}.
\end{align*}

 The Schur square of $C_{1,4}$ is generated by \begin{align*}
    T=\{(\alpha_1^l,\alpha_2^l,\dots,\alpha_n^l): l=0,2,3,4,5,6,7,8,10\}.
\end{align*}

The Schur square of $C_{2,3}$ is generated by \begin{align*}
    T=\{(\alpha_1^l,\alpha_2^l,\dots,\alpha_n^l): l=0,1,2,4,5,6,8,9,10\}.
\end{align*}

 The Schur square of $C_{2,4}$ is generated by \begin{align*}
    T=\{(\alpha_1^l,\alpha_2^l,\dots,\alpha_n^l): l=0,1,2,3,4,5,6,8,10\}.
\end{align*}

 The Schur square of $C_{3,4}$ is generated by \begin{align*}
    T=\{(\alpha_1^l,\alpha_2^l,\dots,\alpha_n^l): l=0,1,2,3,4,5,6,7,10\}.
\end{align*}
As $9\leq  q-2$, therefore, $\dim(C_{i,j}\star C_{i,j}) \ge 2k$ for $1\leq i<j\leq k$.\\
Hence, by Lemma \ref{lemmanongrs},  the codes are non-GRS.
\end{proof}

\begin{theorem}\label{thm-nongrs-ij}
Let $5\leq k\leq \frac{n-4}{2}$ and $C_{i,j}$ be an MDS code generated by $G_{i,j}$. Then $C_{i,j}$ is a non-GRS MDS code. 
\end{theorem}
\begin{proof}
We will show that 
   \begin{align*}
       \dim (C_{i,j}\star C_{i,j})=\begin{cases}
           2k+1 & \text{ if } i=1,j=2,3,k\\
           2k+2 & \text{ if } i=1, 4\leq j\leq k-1\\
           2k+2 & \text{ if } i=2,j=3,k\\
           2k+3 & \text{ if } 2\leq i\leq k-3, 4\leq j\leq k-1\\
           2k+2 & \text{ if } i=k-2,j=k-1\\
        2k+1 & \text{ if } i=k-2, k-1, j=k.\\
       \end{cases}
   \end{align*}
Let $i=1$ and $j=2,3$. Then 
\begin{align*}
    C_{i,j}\star C_{i,j}=\text{span}\{(\alpha_1^l,\alpha_2^l,\dots,\alpha_n^l)|\ 0\leq l\leq 2k+2, l\neq 1,j\}.
\end{align*}
Let $i=1$ and $j=k$. Then 
\begin{align*}
    C_{i,j}\star C_{i,j}=\text{span}\{(\alpha_1^l,\alpha_2^l,\dots,\alpha_n^l)|\ 0\leq l\leq 2k+2, l\neq 1,2k+1\}.
\end{align*}

Let $i=1$ and $4\leq j \leq k-1$. Then 
\begin{align*}
    C_{i,j}\star C_{i,j}=\text{span}\{(\alpha_1^l,\alpha_2^l,\dots,\alpha_n^l)|\ 0\leq l\leq 2k+2, l\neq 1 \}.
\end{align*}
Let $i=2$ and $j=3$. Then 
\begin{align*}
    C_{i,j}\star C_{i,j}=\text{span}\{(\alpha_1^l,\alpha_2^l,\dots,\alpha_n^l)|\ 0\leq l\leq 2k+2, l\neq 3\}.
\end{align*}
Let $i=2$ and $j=k$. Then 
\begin{align*}
    C_{i,j}\star C_{i,j}=\text{span}\{(\alpha_1^l,\alpha_2^l,\dots,\alpha_n^l)|\ 0\leq l\leq 2k+2, l\neq 2k+1\}.
\end{align*}
Let $2\leq i\leq k-3$ and $4\leq j\leq k-1$. Then 
\begin{align*}
    C_{i,j}\star C_{i,j}=\text{span}\{(\alpha_1^l,\alpha_2^l,\dots,\alpha_n^l)|\ 0\leq l\leq 2k+2\}.
\end{align*}
Let $i=k-2$ and $j= k-1$. Then 
\begin{align*}
    C_{i,j}\star C_{i,j}=\text{span}\{(\alpha_1^l,\alpha_2^l,\dots,\alpha_n^l)|\ 0\leq l\leq 2k+2, l\neq 2k-1\}.
\end{align*}
Let $i=k-2$ and $j= k$. Then 
\begin{align*}
    C_{i,j}\star C_{i,j}=\text{span}\{(\alpha_1^l,\alpha_2^l,\dots,\alpha_n^l)|\ 0\leq l\leq 2k+2, l\neq 2k-1, \ l\neq 2k+1\}.
\end{align*}
Let $i=k-1$ and $j= k$. Then 
\begin{align*}
    C_{i,j}\star C_{i,j}=\text{span}\{(\alpha_1^l,\alpha_2^l,\dots,\alpha_n^l)|\ 0\leq l\leq 2k+2, l\neq 2k, \ l\neq 2k+1\}.
\end{align*}
Thus, by Lemma \ref{lemmanongrs}, $C_{i,j}$ is a non-GRS MDS code.
\end{proof}

Using techniques similar to those in  \cite{Jin2024}, we now present an explicit construction of non-GRS MDS codes,  
demonstrating that such codes can be constructed with lengths up to $\left (\frac{q}{p}\right )^{1/4}$.
\begin{theorem}\label{consk1k2} Let $5\leq k\leq \frac{n-2}{2}$, $q=p^m$, $\mathbb{F}_q=\mathbb{F}_p[\gamma]$ and $t=\lfloor \frac{m-1}{4}\rfloor$. Then for $n\leq p^{t}$, if $p\nmid \frac{1}{12}k^2(k^2-1)$, then code $C_{k-2,k-1}$ is a non-GRS MDS code over $\mathbb{F}_q$.
\end{theorem}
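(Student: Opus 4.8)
The plan is to reduce the statement to an MDS verification and then exhibit an explicit evaluation set $\Lambda$ for which the MDS criterion of Corollary \ref{cormdsk1k2} holds. Since the hypotheses already include $5\le k\le \frac{n-2}{2}$, Theorem \ref{thmnongrsk1k2} guarantees that once $C_{k-1,k-2}$ is shown to be MDS it is automatically non-GRS. So the whole task is to choose $\Lambda=\{\alpha_1,\dots,\alpha_n\}$ of size $n\le p^t$ so that
$$f(\beta_1,\dots,\beta_k):=\sigma_2^2(\beta_1,\dots,\beta_k)-\sigma_1(\beta_1,\dots,\beta_k)\,\sigma_3(\beta_1,\dots,\beta_k)\neq 0$$
for every $k$-subset $\{\beta_1,\dots,\beta_k\}\subseteq\Lambda$. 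Note that $f$ is a symmetric polynomial that is homogeneous of degree $4$ in its arguments.

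The first key step I would carry out is to compute the value of $f$ on a constant vector, since this is what reveals the constant in the hypothesis. Evaluating the elementary symmetric functions at $k$ equal arguments $c$ gives $\sigma_1=kc$, $\sigma_2=\binom{k}{2}c^2$, $\sigma_3=\binom{k}{3}c^3$, so that
$$f(c,\dots,c)=\left[\binom{k}{2}^2-k\binom{k}{3}\right]c^4=\frac{k^2(k^2-1)}{12}\,c^4.$$
The quantity $\frac{1}{12}k^2(k^2-1)=\frac{1}{12}k^2(k-1)(k+1)$ is exactly the number appearing in the hypothesis, and one checks it is always an integer. This identifies $\frac{1}{12}k^2(k^2-1)$ as the coefficient that must survive in characteristic $p$.

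Next I would choose the evaluation points so that they all share the same top $\gamma$-coordinate. With $t=\lfloor\frac{m-1}{4}\rfloor$, set $V=\mathrm{span}_{\mathbb{F}_p}\{1,\gamma,\dots,\gamma^{t-1}\}$, which has $p^t$ elements, and take $\alpha_i=\gamma^t+r_i$ with the $r_i\in V$ distinct; since $n\le p^t=|V|$ such a choice exists and the $\alpha_i$ are distinct. For any $k$-subset, each $\beta_j=\gamma^t+(\text{terms of }\gamma\text{-degree}\le t-1)$, so $f(\beta_1,\dots,\beta_k)$ is a polynomial in $\gamma$ of degree at most $4t\le m-1$. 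Expanding, the only contribution to the top degree $\gamma^{4t}$ comes from selecting the leading term $\gamma^t$ in all four factors of every degree-$4$ monomial of $f$, because any term that uses at least one $r_j$ has $\gamma$-degree at most $3t+(t-1)=4t-1$. Hence the coefficient of $\gamma^{4t}$ in $f(\beta_1,\dots,\beta_k)$ equals that of $f(\gamma^t,\dots,\gamma^t)=\frac{k^2(k^2-1)}{12}\gamma^{4t}$, namely the image of $\frac{1}{12}k^2(k^2-1)$ in $\mathbb{F}_p$.

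Finally I would assemble the conclusion. Because $\mathbb{F}_q=\mathbb{F}_p[\gamma]$, the powers $1,\gamma,\dots,\gamma^{m-1}$ are linearly independent over $\mathbb{F}_p$, and since $4t\le m-1$, a nonzero coefficient at $\gamma^{4t}$ forces $f(\beta_1,\dots,\beta_k)\neq 0$ in $\mathbb{F}_q$; the hypothesis $p\nmid \frac{1}{12}k^2(k^2-1)$ is precisely what makes this coefficient nonzero. Thus the criterion of Corollary \ref{cormdsk1k2} holds for every $k$-subset, so $C_{k-1,k-2}$ is MDS and hence, by Theorem \ref{thmnongrsk1k2}, non-GRS MDS. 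The hard part will be the diagonal evaluation that pins down the constant $\frac{1}{12}k^2(k^2-1)$; once the construction forces all evaluation points to share their leading coefficient, the degree-separation argument makes this constant the surviving top coefficient in $\gamma$, and the rest is routine bookkeeping with $\gamma$-degrees.
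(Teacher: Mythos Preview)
Your proposal is correct and is essentially the same argument as the paper's: the paper takes $\alpha_i=g_i(\gamma)$ for monic degree-$t$ polynomials $g_i\in\mathbb{F}_p[x]$, which is exactly your choice $\alpha_i=\gamma^t+r_i$ with $r_i\in\mathrm{span}_{\mathbb{F}_p}\{1,\dots,\gamma^{t-1}\}$, and then observes that $\sigma_2^2-\sigma_1\sigma_3$ is a polynomial in $\gamma$ of degree $4t<m$ with leading coefficient $\binom{k}{2}^2-k\binom{k}{3}=\tfrac{1}{12}k^2(k^2-1)$. Your diagonal evaluation $f(c,\dots,c)$ makes the emergence of this constant more transparent, but otherwise the two proofs coincide.
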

\begin{proof}
   Let $V_t=\{g(x)\in \mathbb{F}_p[x]| g(x) \text{ is monic and } \deg g(x)= t\}$. Then the cardinality of $V_t$ is $p^{t}$. We enumerate the elements of $V_t$ as $g_1(x),g_2(x),\dots,g_{p^{t}}(x)$. Observe that $g_r(\gamma)\neq g_s(\gamma)$ for $r\neq s$. Take $\Lambda=\{\alpha_r=g_r(\gamma) :  1\leq r\leq n\}$. Then for any $ \{r_1,r_2,\dots,r_{k}\}\subset \{1,2,\dots,n\}$, let  $\bm{\beta}_k=(\alpha_{r_1},\alpha_{r_2},\dots,\alpha_{r_k})$,
\begin{align*}
    \begin{split}
         &\sigma_2^2(\bm{\beta_k})-\sigma_1(\bm{\beta_k})\sigma_3(\bm{\beta_k})=\left(\sum_{1\leq s_1<s_2\leq k}\alpha_{r_{s_1}}\alpha_{r_{s_2}}\right)^2-\left(\sum_{1\leq s_1\leq k}\alpha_{r_{s_1}}\right)\left(\sum_{1\leq s_1<s_2<s_3\leq k}\alpha_{r_{s_1}}\alpha_{s_{s_2}}\alpha_{r_{s_3}}\right)\\
         &=\left(\sum_{1\leq s_1<s_2\leq k}g_{r_{s_1}}(\gamma)g_{r_{s_2}}(\gamma)\right)^2-\left(\sum_{1\leq s_1\leq k}g_{r_{s_1}}(\gamma)\right)\left(\sum_{1\leq s_1<s_2<s_3\leq k}g_{r_{s_1}}(\gamma)g_{r_{s_2}}(\gamma)g_{r_{s_3}}(\gamma)\right)\\
         &\neq 0,
         \end{split}
\end{align*}
as  the polynomial  \begin{align*}
  \left(\sum_{1\leq s_1<s_2\leq k}g_{r_{s_1}}(x)g_{r_{s_2}}(x)\right)^2-\left(\sum_{1\leq s_1\leq k}g_{r_{s_1}}(x)\right)\left(\sum_{1\leq s_1<s_2<s_3\leq k}g_{r_{s_1}}(x)g_{r_{s_2}}(x)g_{r_{s_3}}(x)\right)
     \end{align*} 
 has degree $4t<m$ with leading coefficient $\left (\binom{k}{2}^2-k\binom{k}{3}\right)=\frac{1}{12}k^2(k^2-1)$.
 Hence, by Theorem \ref{MDS-ij}, Proposition \ref{k=34nongrs}, and Theorem \ref{thm-nongrs-ij}, the code $C_{k-2,k-1}$ generated by $G_{k-2,k-1}$ is an $[n,k]$ non-GRS MDS code. 
\end{proof}

\begin{remark}
Theorem \ref{consk1k2} is true for $k=3,4$ in accordance with Proposition \ref{k=34nongrs}.
\end{remark}

The next example shows that the matrix $G_{i,j}$ can be utilized to construct non-GRS MDS codes of length $(q-1)/2$.
\begin{example}
    Let $q=17, n=9, k=3$ and $\{\alpha_1,\alpha_2,\dots,\alpha_9\}=\{0,2, 3, 4, 5, 7, 9,10,16\}\subset \mathbb{F}_q$. Then the linear code $C_{1,2}$ generated by
    $$G_{1,2}=\begin{bmatrix}
        1&1&1&1&1&1&1&1&1\\
        0&2^3&3^3&4^3&5^3&7^3&9^3&10^3&16^3\\
         0&2^4&3^4&4^4&5^4&7^4&9^4&10^4&16^4\\
    \end{bmatrix}$$
    is a non-GRS MDS code. 
\end{example}

\subsection{Self-orthogonality and Self-duality  of codes $C_{i,j}$}
In this subsection, we investigate the self-orthogonality and self-duality of the code $C_{i,j}$. The following lemma is an easy observation.

\begin{lemma}\label{selflemma}
    Let $C$ be an $[n,k]$ linear code with a generator matrix $G$. Then $C$ is self-orthogonal if and only if $GG^T$ is a $k\times k$ zero matrix. Moreover, if $n=2k$, then a self-orthogonal code is a self-dual code.
\end{lemma}

Let $\bm{v}=(v_1,v_2,\dots,v_n)\in (\mathbb{F}_q^*)^{n}$. Consider a matrix $$G_{i,j,\bm{v}}=G_{i,j}\cdot \text{diag}[v_1,v_2,\dots,v_n].$$
We denote the linear code generated by $G_{i,j,\bm{v}}$ by $C_{i,j,\bm{v}}$. Then the code $C_{i,j,\bm{v}}$ and $C_{i,j}$ are equivalent. Moreover, $C_{i,j,\bm{v}}$ is MDS if and only if $C_{i,j}$ is MDS.

    

\begin{theorem}\label{thm-so-ij}
Let $ 5\leq k\leq \frac{n-4}{2}$, $2\leq i\leq k-3$, $4\leq j\leq k-1$, ${\bm{v}}=(v_1,v_2,\dots,v_n)\in (\mathbb{F}_q^*)^{n} $, and $u_i$, $1\leq i\leq n$ as in Eq. \ref{ui}. Then the code $C_{i,j,\bm{v}}$ is self-orthogonal if and only if there exists a polynomial $f(x)=f_0+f_1x+\cdots+f_{n-2k-2}x^{n-2k-4}\in \mathbb{F}_q[x]$ such that $v_m^2=u_mf(\alpha_m)$, $1\leq m\leq n$.
\end{theorem}
\begin{proof}
   By Lemma \ref{selflemma}, $C_{i,j,\bm{v}}$ is self-orthogonal if and only if $G_{i,j,\bm{v}}G_{i,j,\bm{v}}^T=0$, that is,
    \begin{align}
  \label{so1-ij}   &\sum_{m=1}^nv_m^2\alpha_m^l=0  \text{ for } 0\leq l\leq 2k+2. 
   \end{align}
Let $C_{i,j,\bm{v}}$ be self-orthogonal, that is, Eq. (\ref{so1-ij}) holds true. Then, Eq. \ref{so1-ij} implies that  $(v_1^2,v_2^2,\dots,v_n^2)$ is a nonzero solution of the following system of linear equations

 \begin{align} \label{so5-ij}
           &\sum_{m=1}^n\alpha_m^l x_m=0  \text{ for all}\ l \text{ with}\ 0\leq l\leq 2k+2. 
        \end{align}
The rank of the coefficient matrix of Eq. \ref{so5-ij} is $2k+3$. Therefore, the dimension of the solution space of Eq. \ref{so5-ij} is $n-2k-3$. Also, by Lemma \ref{paritylemma},
\begin{align*}
\begin{split}
    (u_1,u_2,\dots,u_n), (u_1\alpha_1,u_2\alpha_2,\dots,u_n\alpha_n),\dots,(u_1\alpha_1^{n-2k-4},u_2\alpha_2^{n-2k-4},\dots,u_n\alpha_n^{n-2k-4})
    \end{split}
    \end{align*}
   are $n-2k-3$ linearly independent solutions of Eq. \ref{so5-ij}. This implies that there exist scalars $f_0,f_1,\dots,f_{n-2k-4}$ such that 
   \begin{align*}
   \begin{split}
       v_m^2=\sum_{t=0}^{n-2k-4} u_mf_t\alpha_m^t.
        \end{split}
   \end{align*}
Denote $f(x)=\sum_{t=0}^{n-2k-4}f_tx^t$. Then  $v_m^2= u_mf(\alpha_m)$ for $1\leq m\leq n$.

Conversely, let there exists a polynomial $f(x)=\sum_{t=0}^{n-2k-4}f_tx^t$ such that $v_m^2=u_mf(\alpha_m)$, $1\leq m\leq n$.  It is easy to see that $(v_1^2,v_2^2,\dots,v_n^2)$ will satisfy Eq. \ref{so1-ij}. Thus, $C_{i,j,\bm{v}}$ is self-orthogonal. 
\end{proof}

\begin{theorem}\label{thmsok1k2}
Let $5\leq k\leq  \frac{n}{2}$, $i=k-2$, $j=k-1$, ${\bm{v}}=(v_1,v_2,\dots,v_n)\in (\mathbb{F}_q^*)^{n} $, and $u_m$, $1\leq m\leq n$ as in Eq. \ref{ui}. Then the code $C_{i,j,\bm{v}}$ is self-orthogonal if and only if there exists a polynomial $f(x)=f_0+f_1x+\cdots+f_{n-2k}x^{n-2k}\in \mathbb{F}_q[x]$ such that the following hold.
 \begin{enumerate}
     \item $v_m^2=u_mf(\alpha_m)$, $1\leq m\leq n$.\label{1}
     \item $f_{n-2k-1}+f_{n-2k}S_1=0$.\label{2}
     \item $f_{n-2k-2}+f_{n-2k-1}S_1+f_{n-2k}S_2=0$.\label{3}
     \item $f_{n-2k-3}+f_{n-2k-2}S_1+f_{n-2k-1}S_2+f_{n-2k}S_3=0$.\label{4}
 \end{enumerate}
  Where $f_t=0$ for $t<0$.
\end{theorem}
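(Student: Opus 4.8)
The plan is to compute the matrix $G_{\bm v} G_{\bm v}^T$ entrywise and apply Lemma \ref{selflemma}, which says $C_{\bm v}$ is self-orthogonal exactly when this $k\times k$ matrix vanishes. Writing $g_l$ for the $l$-th row of $G_{k-1,k-2}$, whose exponents run over $\{0,1,\dots,k-3,k,k+1\}$, the $(l,m)$ entry of $G_{\bm v}G_{\bm v}^T$ is $\sum_{i=1}^n v_i^2 \alpha_i^{e_l+e_m}$, where $e_l,e_m$ are the relevant exponents. Condition \ref{1}, namely $v_i^2=u_i f(\alpha_i)$ with $\deg f\le n-2k$, is the natural ansatz: substituting it turns each entry into $\sum_{i=1}^n u_i f(\alpha_i)\alpha_i^{e_l+e_m}$, a linear combination of the sums $\sum_i u_i\alpha_i^{s}$ that Lemma \ref{paritylemma} evaluates explicitly (they are $0$ for $0\le s\le n-2$ and equal $S_{s-n+1}$ for $s\ge n-1$). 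The first step is therefore to justify why such an $f$ must exist in the forward direction; this comes from the fact that the $n$ quantities $v_i^2/u_i$ can be interpolated by a polynomial, and the degree bound $n-2k$ is forced by the orthogonality of $g_l$ against the \emph{low}-exponent rows.

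Concretely, I would organize the entries by the total exponent $s=e_l+e_m$. The maximal exponents occur for $l=m=k$, giving $s=2k+2$; the range of $s$ that actually appears, together with $\deg(f\cdot x^s)\le (n-2k)+2k+2=n+2$, shows most inner products automatically vanish by Lemma \ref{paritylemma} because the exponent stays $\le n-2$, while a handful of top entries produce leftover terms involving $S_1,S_2,S_3$. The key computation is to show that after imposing \ref{1}, every entry $\sum_i u_i f(\alpha_i)\alpha_i^s$ equals a fixed linear combination $\sum_j f_j\,\big(\text{[Lemma \ref{paritylemma} value of }\sum_i u_i\alpha_i^{s+j}]\big)$, and that demanding these vanish for the few surviving $(l,m)$ pairs collapses exactly to conditions \ref{2}--\ref{4}. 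Because $k\ge5$, the ``gap'' rows (exponents $k,k+1$) are high enough that the problematic sums only reach $S_1,S_2,S_3$ and no higher complete symmetric function intrudes, which is why precisely three scalar conditions appear and why the hypothesis $k\ge 5$ is needed.

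For the converse I would simply reverse this: assume an $f$ satisfying \ref{1}--\ref{4} exists, substitute $v_i^2=u_if(\alpha_i)$ into each of the $k^2$ entries of $G_{\bm v}G_{\bm v}^T$, and verify via Lemma \ref{paritylemma} and Eq. \ref{eqA} that every entry is $0$, so that Lemma \ref{selflemma} yields self-orthogonality. I expect the main obstacle to be the bookkeeping of \emph{which} $(l,m)$ pairs produce a nonzero exponent $s+j\ge n-1$ and hence a nonvanishing $S_{s+j-n+1}$ term: one must track the interaction of the two skipped exponents $k-1,k-2$ with the interpolation degree $n-2k$ so that the surviving contributions line up precisely with the left-hand sides of \ref{2}, \ref{3}, \ref{4}. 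Verifying that no \emph{fourth} independent condition is hidden among the top entries — i.e., that conditions \ref{2}--\ref{4} are genuinely sufficient and not merely necessary — is the delicate part, and I would handle it by carefully enumerating the entries with largest total exponent and showing their vanishing is implied by the three stated relations together with the symmetric-function identity \ref{eqA}.
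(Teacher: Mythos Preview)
Your strategy matches the paper's: reduce to $G_{\bm v}G_{\bm v}^T=0$, list the distinct exponents $s=e_l+e_m$ that occur, and handle them via Lemma~\ref{paritylemma} after writing $v_i^2=u_if(\alpha_i)$. Two points of your commentary are off, though neither breaks the argument once you actually do the bookkeeping.

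First, your explanation of the hypothesis $k\ge5$ is inverted. The maximal combined exponent is always $(n-2k)+(2k+2)=n+2$, so only $S_1,S_2,S_3$ ever appear at the top, \emph{regardless} of $k$. What $k\ge5$ actually buys is that the sumset $\{e_l+e_m:e_l,e_m\in\{0,\dots,k-3,k,k+1\}\}$ contains \emph{every} integer in $\{0,\dots,2k-2\}$: since $2(k-3)\ge k-1$ precisely when $k\ge5$, the two ranges $\{0,\dots,2k-6\}$ and $\{k,\dots,2k-2\}$ meet. This full coverage of low exponents is what lets you drive the interpolant down to degree $\le n-2k$ in the forward direction, and it is exactly why for $k=3,4$ the conditions remain sufficient but cease to be necessary (Remark~\ref{k34}).

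Second, your worry about a hidden fourth condition, and your plan to invoke Eq.~\ref{eqA}, are both unnecessary. Once $\deg f\le n-2k$, every entry with $s\le 2k-2$ vanishes outright by Lemma~\ref{paritylemma} because $s+j\le n-2$; the only surviving entries are those with $s\in\{2k,2k+1,2k+2\}$, and each one is \emph{literally} the left-hand side of \ref{2}, \ref{3}, \ref{4} respectively. So sufficiency is a one-line substitution, with no symmetric-function identity required.
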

\begin{proof}
   By Lemma \ref{selflemma}, $C_{i,j,\bm{v}}$ is self-orthogonal if and only if $G_{i,j,\bm{v}}G_{i,j\bm{v}}^T=0$, that is,
    \begin{align}
        \label{so1}    &\sum_{m=1}^nv_m^2\alpha_m^l=0  \text{ for } 0\leq l\leq 2k-2 \\
          \label{so2}  &\sum_{m=1}^nv_m^2\alpha_m^{2k}=0&  \\
          \label{so3}  &\sum_{m=1}^nv_m^2\alpha_m^{2k+1}=0&\\
          \label{so4}  &\sum_{m=1}^nv_m^2\alpha_m^{2k+2}=0.&
   \end{align}
Let $C_{i,j,\bm{v}}$ be self-orthogonal, that is, Eqs. \ref{so1}-\ref{so4} hold. Then, Eq. \ref{so1} implies that  $(v_1^2,v_2^2,\dots,v_n^2)$ is a nonzero solution of the following system of linear equations

 \begin{align} \label{so5}
           &\sum_{m=1}^n\alpha_m^l x_m=0  \text{ for all}\ l \text{ with}\ 0\leq l\leq 2k-2. 
        \end{align}
The rank of the coefficient matrix of Eq. \ref{so5} is $2k-1$. Therefore, the dimension of the solution space of Eq. \ref{so5} is $n-2k+1$. Also, by Lemma \ref{paritylemma},
\begin{align*}
\begin{split}
    (u_1,u_2,\dots,u_n), (u_1\alpha_1,u_2\alpha_2,\dots,u_n\alpha_n),\dots,(u_1\alpha_1^{n-2k},u_2\alpha_2^{n-2k},\dots,u_n\alpha_n^{n-2k})
    \end{split}
    \end{align*}
   are $n-2k+1$ linearly independent solutions of Eq. \ref{so5}. This implies that there exist scalars $f_0,f_1,\dots,f_{n-2k}$ such that 
   \begin{align*}
   \begin{split}
       v_m^2=\sum_{t=0}^{n-2k} u_mf_t\alpha_m^t.
        \end{split}
   \end{align*}
Denote $f(x)=\sum_{t=0}^{n-2k}f_tx^t$. Then  $v_m^2= u_mf(\alpha_m)$ for $1\leq m\leq n$. Furthermore, $v_m^2$, $1\leq m\leq n$, also satisfy Eqs. \ref{so2}, \ref{so3}, \ref{so4}. Thus, by Lemma \ref{paritylemma}, we have \ref{2}, \ref{3} and \ref{4}.

Conversely, let there exists a polynomial $f(x)=\sum_{t=0}^{n-2k}f_tx^t$ satisfying \ref{1}-\ref{4}. It is easy to see that $(v_1^2,v_2^2,\dots,v_n^2)$ will satisfy Eqs. \ref{so1}-\ref{so4}. Thus, $C_{i,j,\bm{v}}$ is self-orthogonal. 
\end{proof}
In the following example, we construct a self-orthogonal code using Theorem \ref{thmsok1k2}.

\begin{example}
 Let $n=11$, $k=5$, $q=2^5$ and $\mathbb{F}_q=\mathbb{F}_2[w]$ with $w^5+w^2+1=0$. Let $\Lambda=\{w,w^2,w^3, w^4, w^5, w^6, w^{10}, w^{13}, w^{17}, w^{21}, w^{26}\}\subseteq\mathbb{F}_q$. Then 
 \begin{align*}
     u_1=w^4, u_2=w^{29}, u_3=w, u_4=w^{14}, u_5=w^{18},u_6=w^{25},\\
      u_7=w^{11}, u_8={w}, u_9=w^{10}, u_{10}=w, u_{11}=w^6.
 \end{align*}
Let $f(x)=x$. Then $u_mf(\alpha_m)=u_m\alpha_m$, $1\leq m\leq n$, are squares. As per Theorem \ref{thmsok1k2}, take $v_m$'s such that $v_m^2=u_mf(\alpha_m)$, $1\leq m\leq n$, that are, \begin{align*}
    v_1=w^{18}, v_2=1, v_3=w^2,v_4=w^9,v_5=w^{27},v_6=1,\\
    v_7=w^{26},v_8=w^7,v_9=w^{29},v_{10}=w^{11},v_{11}=w^{16}.
\end{align*} 
It is easy to see that $\alpha_m$'s satisfy the conditions in Theorem \ref{thmsok1k2} for $f(x)=x$. Therefore,  $C_{k-2,k-1,\bm{v}}$ is a self-orthogonal code over $\mathbb{F}_q$ with parameters $[11,5,6]$ (AMDS) and a generator matrix 
\begin{align*}
    G_{k-2,k-1,\bm{v}}&=G_{k-2,k-1}\cdot \text{diag}(\bm{v})\\
    &=\begin{bmatrix}
        w^{18}&    1&  w^2&  w^9 &w^{27} &   1 &w^{26}&  w^7 &w^{29} &w^{11}& w^{16}\\
w^{19}&  w^2&  w^5& w^{13} &   w & w^6 & w^5& w^{20} &w^{15}&    w& w^{11}\\
w^{20}&  w^4&  w^8& w^{17}&  w^6& w^{12}& w^{15}&  w^2&    w &w^{22} & w^6\\
w^{23}& w^{10} &w^{17}& w^{29}& w^{21}& w^{30}& w^{14}& w^{10}& w^{21}& w^{23}& w^{22}\\
w^{24}& w^{12}& w^{20} & w^2& w^{26} & w^5& w^{24} &w^{23} & w^7& w^{13} &w^{17}
    \end{bmatrix}.
\end{align*}
\end{example}

Using Theorem \ref{thm-so-ij} and Theorem \ref{thmsok1k2}, we have the following characterization for self-duality.

\begin{corollary}\label{cor-so-ij}
Let $ 5\leq k\leq \frac{n-4}{2}$, $2\leq i\leq k-3$, $4\leq j\leq k-1$. Then the code $C_{i,j,\bm{v}}$ is never self-dual  for any $\bm{v}\in (\mathbb{F}_q^*)^n$.
\end{corollary}
\begin{proof}
For $n=2k$, we have $n-2k-4<0$. Thus, by Theorem \ref{thm-so-ij}, $C_{i,j,\bm{v}}$ is not self-orthogonal for any $\bm{v}\in (\mathbb{F}_q^*)^n$.    
\end{proof}

\begin{corollary}\label{thmsdk1k2}
Let $k\geq 5$, $n=2k$, $\{\alpha_1,\alpha_2,\dots,\alpha_n\}\subseteq\mathbb{F}_q$, and $u_m$, $1\leq m\leq n$ as in Eq. \ref{ui}. Then $C_{k-2,k-1,\bm{v}}$ is self-dual for some ${\bm{v}}=(v_1,v_2,\dots,v_n)\in(\mathbb{F}_q^*)^n$ if and only if  the following hold:
\begin{enumerate}
    \item $v_m^2=\lambda u_m$, $1\leq m\leq n$, for some $\lambda\in \mathbb{F}_q^*$.
    \item $S_1=S_2=S_3=0$.
\end{enumerate}
\end{corollary}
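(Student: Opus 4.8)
The plan is to specialize Theorem \ref{thmsok1k2} to the case $n=2k$ and simply read off what the four self-orthogonality conditions become. Since $C_{\bm v}$ is a $[2k,k]$ code, the ``moreover'' part of Lemma \ref{selflemma} guarantees that self-orthogonality and self-duality coincide here (if $C\subseteq C^\perp$ and $\dim C^\perp = 2k-k = k = \dim C$, then $C=C^\perp$), so it suffices to characterize when $C_{\bm v}$ is self-orthogonal and then invoke that lemma.

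First I would observe that when $n=2k$ the polynomial appearing in Theorem \ref{thmsok1k2} is $f(x)=f_0+f_1x+\dots+f_{n-2k}x^{n-2k}=f_0$, a constant; write $\lambda=f_0$. Condition (\ref{1}) then reads $v_i^2=\lambda u_i$ for all $1\leq i\leq n$. Because $\bm v\in(\mathbb F_q^*)^n$ and each $u_i\neq 0$ (being a product of inverses of the nonzero differences $\alpha_i-\alpha_j$), the scalar $\lambda$ is forced to be nonzero, i.e.\ $\lambda\in\mathbb F_q^*$. This establishes the first listed condition together with the necessity that $\lambda$ be a unit.

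Next I would evaluate conditions (\ref{2})--(\ref{4}). With $n-2k=0$, every coefficient of negative index vanishes, so $f_{n-2k-1}=f_{n-2k-2}=f_{n-2k-3}=0$ while $f_{n-2k}=\lambda$. The three equations then collapse to $\lambda S_1=0$, $\lambda S_2=0$, and $\lambda S_3=0$. Since $\lambda\neq 0$, these are equivalent to $S_1=S_2=S_3=0$, which is the second listed condition. For the converse, given $\lambda\in\mathbb F_q^*$ with $v_i^2=\lambda u_i$ and $S_1=S_2=S_3=0$, the constant polynomial $f(x)=\lambda$ satisfies (\ref{1})--(\ref{4}), so Theorem \ref{thmsok1k2} makes $C_{\bm v}$ self-orthogonal, and Lemma \ref{selflemma} upgrades this to self-dual because $n=2k$.

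This argument is essentially a substitution into the already-proved Theorem \ref{thmsok1k2}, so I do not anticipate a genuine obstacle. The only point requiring care is justifying that $\lambda\neq 0$, which is forced by $v_i\neq 0$ and $u_i\neq 0$; this is precisely what converts the homogeneous relations $\lambda S_t=0$ into the clean vanishing conditions $S_1=S_2=S_3=0$.
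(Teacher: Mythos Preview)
Your proposal is correct and follows exactly the paper's approach: the paper's own proof is the single sentence ``By putting $n=2k$ in Theorem \ref{thmsok1k2}, we get the desired result,'' and you have simply written out the details of that substitution, including the justification that $\lambda\neq 0$ and the appeal to Lemma \ref{selflemma} to pass from self-orthogonal to self-dual.
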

\begin{proof}
By putting $n=2k$ in Theorem \ref{thmsok1k2}, we get the desired result.    
\end{proof}

\begin{example}
 Let $n=10$, $k=5$, $q=2^4$ and $\mathbb{F}_q=\mathbb{F}_2[w]$ with $w^4+w+1=0$. For $\Lambda=\{w, w^2, w^4, w^5, w^7, w^8, w^{10}, w^{11}, w^{13}, w^{14}\}\subseteq\mathbb{F}_q$, we have $S_1=S_2=S_3=0$ and 
 \begin{align*}
     u_1=w^{11}, u_2=w^7,u_3=w^{14},u_4=w^{10},u_5=w^2,u_6=w^{13},u_7=w^5,u_8=w,u_9=w^8,u_{10}=w^4.
 \end{align*}
Take \begin{align*}
    v_1=w^{13},  v_2=w^{11},v_3=w^{7},v_4=w^{5},v_5=w,v_6=w^{14},v_7=w^{10},v_8=w^8,v_9=w^4,v_{10}=w^2.
\end{align*} 
Then $v_m^2=u_m$ for $1\leq m\leq 10$. Thus by Corollary \ref{thmsdk1k2}, the code $C_{k-2,k-1,\bm{v}}$ is a self-dual code over $\mathbb{F}_q$ with a generator matrix 
\begin{align*}
    G_{k-2,k-1,\bm{v}}&=G_{k-2,k-1}\cdot \text{diag}(\bm{v})\\
    &=\begin{bmatrix}
        w^{13} & w^{11}  & w^7 & w^5 &   w& w^{14} & w^{10} & w^8 & w^4 & w^2\\
w^{14} & w^{13} & w^{11} & w^{10} & w^8 &  w^7 & w^5 &  w^4  & w^2  &  w\\
   1 &   1   & 1    &1   & 1   & 1    &1  &  1   & 1  &  1\\
 w^3&  w^6 &w^{12}&    1 & w^6 & w^9 &   1 & w^3  &w^9& w^{12}\\
w^4 & w^8 &   w & w^5 &w^{13}&  w^2& w^{10}& w^{14}&  w^7 &w^{11}
    \end{bmatrix}.
\end{align*}
\end{example}

\begin{example}
 Let $n=10$, $k=5$, $q=5^2$ and $\mathbb{F}_q=\mathbb{F}_5[w]$ with $w^2+4w+2=0$. For $\Lambda=\{w, w^4, w^5, 2, w^8, w^{13}, w^{16}, w^{17}, 3, w^{20}\}\subseteq\mathbb{F}_q$, we have $S_1=S_2=S_3=0$ and 
 \begin{align*}
     u_1=3, u_2=3,u_3=3,u_4=2,u_5=2,u_6=2,u_7=2,u_8=2,u_9=3,u_{10}=3.
 \end{align*}
Take \begin{align*}
    v_1=w^9,  v_2=w^9,v_3=w^9,v_4=w^3,v_5=w^3,v_6=w^3,v_7=w^3,v_8=w^3,v_9=w^9,v_{10}=w^9.
\end{align*} 
Then $v_m^2=u_m$ for $1\leq m\leq 10$. Thus by Corollary \ref{thmsdk1k2}, the code $C_{k-2,k-1,\bm{v}}$ is a self-dual code over $\mathbb{F}_q$ with a generator matrix 
\begin{align*}
    G_{k-2,k-1,\bm{v}}&=G_{k-2,k-1}\cdot \text{diag}(\bm{v})\\
    &=\begin{bmatrix}
         w^9 & w^9 & w^9 & w^3 & w^3 & w^3 & w^3 & w^3  &w^9&  w^9\\
w^{10}& w^{13}& w^{14}&  w^9& w^{11}& w^{16}& w^{19}& w^{20}&  w^3&  w^5\\
w^{11}& w^{17}& w^{19}& w^{15}& w^{19}&  w^5& w^{11}& w^{13}& w^{21}&    w\\
w^{14}&  w^5& w^{10}  &w^9 &w^{19}& w^{20} &w^{11} &w^{16}  &w^3& w^{13}\\
w^{15}  &w^9 &w^{15} &w^{15}&  w^3&  w^9&  w^3&  w^9& w^{21} & w^9
    \end{bmatrix}.
\end{align*}
\end{example}

\begin{remark}\label{k34}
For $k=3,4$, it is easy to see that the conditions in Theorem \ref{thmsok1k2} and Corollary \ref{thmsdk1k2} are sufficient conditions for self-orthogonality and self-duality, respectively.  
\end{remark}

\begin{example}
Let $q=23$, $n=9$, $k=3$, and $\{\alpha_1,\alpha_2,\dots,\alpha_9\}=\{0,1,2,3,4,5,6,7,18\}$. Then, we have 
$$u_1=3,u_2=17,u_3=22,u_4=12,u_5=20,u_6=3,u_7=20,u_8=16,u_9=2.$$  
Take  $f(x)=x^3+21x+18\in \mathbb{F}_q[x]$. Then
\begin{align*}
    \begin{split}
        u_1f(\alpha_1)=8, u_2f(\alpha_2)=13,u_3f(\alpha_3)=1,u_4f(\alpha_4)=8,u_5f(\alpha_5)=8,\\
         u_6f(\alpha_6)=8, u_7f(\alpha_7)=1,u_8f(\alpha_8)=9,u_9f(\alpha_9)=13.
    \end{split}
\end{align*}
Observe that $u_mf(\alpha_m)$'s, $1\leq m\leq n$, are squares. Therefore, we can take $v_i$'s such that $v_m^2=u_mf(\alpha_m)$ for all $1\leq m\leq n$, that are, 
\begin{align*}
    v_1=13,v_2= 6,v_3= 1, v_4=13, v_5=13, v_6=13, v_=1, v_8=3, v_9=6.
\end{align*}
Also, $$S_1=0, S_2=2, \text{ and } S_3=5.$$
Consequently,
\begin{align*}
    f_2+f_3S_1=0, f_1+f_2S_1+f_3S_2=0, \text{ and } f_0+f_1S_1+f_2S_2+f_3S_3=0.
\end{align*}
Hence, by Remark \ref{k34}, the code $C_{1,2,\bm{v}}$ is self-orthogonal with a generator matrix
\begin{align*}
    G_{2,1,\bm{v}}=\begin{bmatrix}
        13 &  6 & 1& 13& 13& 13 & 1  &3&  6\\
 0 & 6  &8 & 6&  4 &15&  9 &17&  9\\
 0 & 6& 16 &18& 16 & 6&  8&  4 & 1
    \end{bmatrix}.
\end{align*} 

\end{example}

\begin{example}
    Let $q=17$, $n=8$, and $k=4$. For $$\{\alpha_1,\alpha_2,\dots,\alpha_n\}=\{3,5,6,7,10,11,12,14\},$$ we get $S_1=S_2=S_3=0$ and $$u_1=6,u_2=10,u_3=12,u_4=14,u_5=3,u_6=5,u_7=7,u_8=11.$$ Take $v=(1, 9, 11, 12, 3, 7, 15, 4)$. Then $v_m^2=3u_m$, $1\leq m\leq 8$. By Remark \ref{k34}, the code $C_{2,3,\bm{v}}$ is a self-dual code  with a generator matrix
    \begin{align*}
        G_{2,3,\bm{v}}=\begin{bmatrix}
             1 & 9 &11 &12  &3  &7 &15 & 4\\
 3& 11& 15 &16& 13&  9& 10&  5\\
13& 15& 10 &14& 12& 11&  8&  1\\
 5 & 7 & 9& 13 & 1&  2& 11& 14
        \end{bmatrix}.
    \end{align*} 
\end{example}

\section{The second family of linear codes}\label{secondfam}

Let $n,k$, and $h$ be positive integers with $3\leq k\leq h \leq n-1$, define  $$G_{h,k}=\begin{bmatrix}
    1&1&\dots&1\\
    \alpha_1&\alpha_2&\dots&\alpha_n\\
    \vdots&\vdots&\vdots&\vdots\\
    \alpha_1^{k-2}&\alpha_2^{k-2}&\dots&\alpha_n^{k-2}\\
    \alpha_1^h&\alpha_2^h&\dots&\alpha_n^h
\end{bmatrix}.$$
Let $C_{h,k}$ be the linear code generated by $G_{h,k}$. It is easy to see that $C_{h,k}$ is an $[n,k]$ code. 
Note that if $q=2^m$ for some positive integer $m$ and $k=3$, then the MDS code $C_{h,3}$ of length $q$ corresponds to a planar arc, which can be extended to a hyperoval in $PG(2,q)$, and the monomial $x^h$ is called an $o$-monomial (for details,  see \cite{Ball2019,Hirschfeld1998}). In this section, we establish necessary and sufficient conditions on the set $\Lambda=\{\alpha_1,\alpha_2,\dots,\alpha_n\}$ such that code $C_{h,k}$ is MDS. Furthermore, we study the non-GRS property of $C_{h,k}$. First, we determine a parity check matrix.

\subsection{Parity check matrix for code $C_{h,k}$}
In this subsection, we determine a parity check matrix for code $C_{h,k}$.

\begin{theorem}
Let $C_{h,k}$ be the code generated by $G_{h,k}$. Let $h=(k-1)+r$, $1\leq r \leq n-k$. Then
$$H_{h,k}=\begin{bmatrix}
    u_1&u_2&\dots&u_n\\
    u_1\alpha_1&u_2\alpha_2&\dots&u_n\alpha_n\\
    \vdots&\vdots&\vdots&\vdots\\
    u_1\alpha_1^{n-k-r-1}&u_2\alpha_2^{n-k-r-1}&\dots&u_n\alpha_n^{n-k-r-1}\\
    u_1\beta_{11}&u_2\beta_{21}&\dots&u_n\beta_{n1}\\
      u_1\beta_{12}&u_2\beta_{22}&\dots&u_n\beta_{n2}\\
        \vdots&\vdots&\vdots&\vdots\\
         u_1\beta_{1r}&u_2\beta_{2r}&\dots&u_n\beta_{nr}
\end{bmatrix}$$
is a parity check matrix for $C_{h,k}$, where
$$\beta_{ia}=\sum_{j=0}^a(-1)^j\sigma_j\alpha_i^{n-k-r+(a-j)}$$
for $1\leq i\leq n$ and $1\leq a\leq r$.
\end{theorem}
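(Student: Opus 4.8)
The plan is to follow the template of Theorem~\ref{parityk-1k-2}. Since $C_{h,k}$ has dimension $k$, its dual has dimension $n-k$, which is exactly the number of rows of $H_{h,k}$; hence it suffices to verify two things: (i) every row of $G_{h,k}$ is orthogonal to every row of $H_{h,k}$, i.e. $G_{h,k}H_{h,k}^{T}=0$, and (ii) the rows of $H_{h,k}$ are linearly independent. The single workhorse throughout is the uniform reformulation of Lemma~\ref{paritylemma}: using the convention $S_t=0$ for $t<0$, its two cases merge into $\sum_{i=1}^{n}u_i\alpha_i^{e}=S_{e-n+1}$ for every $e\ge 0$. I also use the natural requirement $n-k-r\ge 0$ (equivalently $h\le n-1$), without which the exponent $n-k-r+(a-j)$ occurring in $\beta_{ia}$ could be negative and the first block of $H_{h,k}$ empty of meaning.

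For (i) I split the computation by the two types of rows of $G_{h,k}$ — the low rows $g_\ell=(\alpha_i^{\ell-1})_i$ for $1\le\ell\le k-1$ and the single high row $g_k=(\alpha_i^{h})_i$ with $h=k-1+r$ — and by whether the $H$-row lies in the first (Vandermonde) block or in the $\beta$-block. Applying the uniform formula turns every inner product into a signed combination of $S_t$'s. In the three easy families — low row against either block, and high row against the first block — a short index count shows the relevant exponent $e$ always satisfies $e\le n-2$, so each $S_{e-n+1}=0$ and the product vanishes. The only substantive case is $g_k$ against a $\beta$-row, where the typical exponent is $(k-1+r)+(n-k-r+a-j)=n-1+a-j$, so the product collapses to
\[
g_k\,h_{n-k-r+a}^{T}=\sum_{j=0}^{a}(-1)^j\sigma_j\,S_{a-j}.
\]
This is exactly the left-hand side of Eq.~\ref{eqA} with $N=a\ge 1$, hence $0$. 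In effect the coefficients $(-1)^j\sigma_j$ defining $\beta_{ia}$ are chosen precisely so that pairing with the high row reproduces this Newton-type identity.

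For (ii) I argue by induction using test vectors. The first $n-k-r$ rows are independent for the usual Vandermonde reason. For each $a=1,\dots,r$ I take $w_a=(\alpha_1^{\,k+r-1-a},\dots,\alpha_n^{\,k+r-1-a})$, whose exponent lies between $k-1$ and $k+r-2$ and is therefore a legitimate non-negative power. The uniform formula gives $w_a h_m^{T}=0$ for every $m$ in the first block, while for a $\beta$-row it yields $w_a h_{n-k-r+b}^{T}=\sum_{j=0}^{b}(-1)^j\sigma_j S_{b-a-j}$: when $b<a$ every index $b-a-j$ is negative so this vanishes, and when $b=a$ only the $j=0$ term survives, giving $S_0=1$. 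Thus $w_a$ annihilates all rows preceding the $a$-th $\beta$-row but pairs to $1$ with it, so that row cannot lie in the span of the earlier ones; letting $a$ run from $1$ to $r$ shows all $n-k$ rows are independent.

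I expect the only real obstacle to be the bookkeeping: in each orthogonality case and each test-vector computation one must track whether the exponent $e$ satisfies $e\le n-2$ (yielding $0$) or $e\ge n-1$ (yielding $S_{e-n+1}$), with the boundary $e=n-1\leftrightarrow S_0=1$ handled consistently. Committing to the uniform identity $\sum_i u_i\alpha_i^{e}=S_{e-n+1}$ with $S_{<0}=0$ from the start eliminates the case split and makes both the vanishing claims and the two ``$=1$'' evaluations essentially automatic; the genuine content is just the observation that $\beta_{ia}$ is the correction of $\alpha_i^{\,n-k-r+a}$ that turns the high-row pairing into Eq.~\ref{eqA}.
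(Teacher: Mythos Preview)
Your approach is essentially identical to the paper's: both verify $G_{h,k}H_{h,k}^{T}=0$ by splitting into low/high rows of $G$ against Vandermonde/$\beta$-rows of $H$ and invoking Lemma~\ref{paritylemma} and Eq.~\ref{eqA}, and both prove independence via the test vectors $w_a=(\alpha_i^{\,k+r-1-a})_i$ (the paper writes the exponent as $k-(a-r+1)$). Your flag that the statement implicitly requires $r\le n-k$ (equivalently $h\le n-1$) is a valid observation that the paper leaves tacit.
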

\begin{proof}
Using Lemma \ref{paritylemma}, it is easy to see that all rows of $G_{h,k}$ are orthogonal to the first $(n-k-r)$ rows of $H_{h,k}$. Let $g_l$ and $\bar{h}_m$ denote $l^{\mathrm{th}}$ and $m^{\mathrm{th}}$ rows ($1\leq m\leq n-k-r$) of $G_{h,k}$ and $H_{h,k}$, respectively. For $1\leq a\leq r$, we denote the $(n-k-r+a)$-th row of $H_{h,k}$ by   $h_a$. Then, for $1\leq l\leq k-1$, we have 
\begin{align*}
    g_lh_a^T=&\sum_{i=1}^nu_i\alpha_i^{l-1}\beta_{ia}\\
    =&\sum_{i=1}^nu_i\alpha_i^{l-1}\sum_{j=0}^a(-1)^j\sigma_j\alpha_i^{n-k-r+(a-j)}\\
    =&\sum_{j=0}^a(-1)^j\sigma_j\sum_{i=1}^nu_i\alpha_i^{n-k-r+(a-j)+(l-1)}\\
    = & 0
\end{align*}
by Lemma \ref{paritylemma}, since $n-k-r+(a-j)+(l-1)\leq n-2$. 
For $l=k$, we have 
\begin{align*}
    g_lh_a^T=&\sum_{i=1}^nu_i\alpha_i^{h}\beta_{ia}\\
    =&\sum_{i=1}^nu_i\alpha_i^{(k-1)+r}\sum_{j=0}^a(-1)^j\sigma_j\alpha_i^{n-k-r+(a-j)}\\
    =&\sum_{j=0}^a(-1)^j\sigma_j\sum_{i=1}^nu_i\alpha_i^{n-k-r+(a-j)+(k-1+r)}\\ =&\sum_{j=0}^a(-1)^j\sigma_j\sum_{i=1}^nu_i\alpha_i^{n+(a-j)-1}\\
    =&\sum_{j=0}^a(-1)^j\sigma_j S_{a-j}\ \ \\
    =&0 
\end{align*}
by Lemma \ref{paritylemma} and Eq. \ref{eqA}. Therefore, each row of $G_{h,k}$ is orthogonal to every row of $H_{h,k}$. Next, we show that the rows of $H_{h,k}$ are linearly independent. For $1\leq a\leq r$, let $$v_a=\left (\alpha_1^{k-(a-r+1)},\alpha_2^{k-(a-r+1)},\dots,\alpha_n^{k-(a-r+1)}\right ).$$
Then 
\begin{align*}
    v_ah_a^T=&\sum_{i=1}^nu_i\alpha_i^{k-(a-r+1)}\beta_{ia}\\
          =&\sum_{i=1}^nu_i\alpha_i^{k-(a-r+1)}\sum_{j=0}^a(-1)^j\sigma_j\alpha_i^{n-k-r+(a-j)}\\
          =& \sum_{j=0}^a(-1)^j\sigma_j\sum_{i=1}^nu_i\alpha_i^{n-j-1}\\
          =& 1+\sum_{j=1}^a(-1)^j\sigma_j\sum_{i=1}^nu_i\alpha_i^{n-j-1}\\
          =&1
\end{align*}
by Lemma \ref{paritylemma}. Also, for $1\leq a, b\leq r$ and $a \neq b$, we have 
\begin{align*}
    v_ah_b^T=&\sum_{i=1}^nu_i\alpha_i^{k-(a-r+1)}\beta_{ib}\\
          =&\sum_{i=1}^nu_i\alpha_i^{k-(a-r+1)}\sum_{j=0}^b(-1)^j\sigma_j\alpha_i^{n-k-r+(b-j)}\\
          =& \sum_{j=0}^b(-1)^j\sigma_j\sum_{i=1}^nu_i\alpha_i^{n-j-1+(b-a)}\\
          =& \begin{cases}
        \sum_{j=0}^b(-1)^j\sigma_j\sum_{i=1}^nu_i\alpha_i^{n-j-1+(b-a)}=0 & \text{ if } b-a<0\\  \\
        \sum_{j=0}^b(-1)^j\sigma_j\sum_{i=1}^nu_i\alpha_i^{n-j-1+(b-a)} \\= \sum_{j=0}^b(-1)^j\sigma_jS_{(b-a)-j} & \text{ if } b-a>0\\
        = \sum_{j=0}^{b-a}(-1)^j\sigma_jS_{(b-a)-j}+\sum_{j=b-a+1}^b(-1)^j\sigma_jS_{(b-a)-j}\\
        =0 
          \end{cases}          
\end{align*}
by Eq. \ref{eqA} and $S_t=0$ for $  t<0$, that is, $v_ah_b^T=0$ for $1\leq a, b\leq r$ and $a\neq b$. Further, for $1\leq m\leq n-k-r$ and $1\leq a\leq r$, we have
\begin{align*}
    v_a\bar{h}_m^T=&\sum_{i=1}^nu_i\alpha_i^{k-(a-r+1)}\alpha_i^{m-1}\\
          =&\sum_{i=1}^nu_i\alpha_i^{k+r+m-a-2}\\
          =& 0 \ \ \ 
\end{align*}
by Lemma \ref{paritylemma}, since   $k+r+m-a-2\leq n-2$.
Now we show that the rows of $H_{h,k}$ are linearly independent. Assume that rows are linearly dependent. It is easy to see that the first $(n-k-r)$ rows of $H_{h,k}$ are linearly independent. Choose the minimal $a$,  such that $h_a$ is a linear combination of preceding rows $\bar{h}_1,\bar{h}_2,\dots,\bar{h}_{n-k-r},h_1,\dots,h_{a-1}$. Let  $\mathcal{L}=\text{span}\{\bar{h}_1,\bar{h}_2,\dots,\bar{h}_m,h_1,\dots,h_{a-1}\}$. However, $v_a$  is orthogonal to $\mathcal{L}$ but not orthogonal to $h_a$, a contradiction. Hence, the rows are linearly independent. This completes the proof.  
\end{proof}


\subsection{MDS and non-GRS properties of code $C_{h,k}$}
In this subsection, we determine the necessary and sufficient conditions under which the code $C_{h,k}$ attains the MDS property and examine its non-GRS properties. We further present explicit constructions and examples. In the next theorem, we provide the necessary and sufficient condition for $C_{h,k}$ to be MDS. 
\begin{theorem}\label{thmmdshk}
 The code $C_{h,k}$ generated by $G_{h,k}$
is an MDS code if and only if $$S_{h-k+1}(\beta_1,\beta_2,\dots,\beta_k)\neq0$$ for all $\{\beta_1,\beta_2,\dots,\beta_k\}\subseteq \Lambda$ with distinct $\beta_i$.
\end{theorem}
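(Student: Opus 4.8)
The plan is to apply the generator-matrix characterization of MDS codes from Lemma~\ref{mdsnonsingular}: the code $C_{h,k}$ is MDS if and only if every $k\times k$ submatrix of $G_{h,k}$ is nonsingular. A $k\times k$ submatrix is obtained by selecting $k$ columns indexed by some subset $\{i_1,\dots,i_k\}$, which amounts to choosing a $k$-subset $\{\beta_1,\dots,\beta_k\}\subseteq\Lambda$ of distinct elements. So the MDS property reduces to showing that the determinant of the square matrix
$$\begin{bmatrix}
1&1&\dots&1\\
\beta_1&\beta_2&\dots&\beta_k\\
\vdots&\vdots&\vdots&\vdots\\
\beta_1^{k-2}&\beta_2^{k-2}&\dots&\beta_k^{k-2}\\
\beta_1^h&\beta_2^h&\dots&\beta_k^h
\end{bmatrix}$$
is nonzero for every such subset. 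This matrix is exactly the generalized Vandermonde matrix $G_h$ from the preliminaries, but in the $k$ variables $\beta_1,\dots,\beta_k$ rather than in $n$ variables.

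The key computational step is therefore to evaluate this determinant, and here I would invoke Lemma~\ref{detgh} directly. Applying that result with $n$ replaced by $k$ and with the variables $\beta_1,\dots,\beta_k$, and using the hypothesis $h\ge k-1$ (which holds since $h=(k-1)+r$ with $r\ge 1$, so that $h\ge k-1$ is exactly the regime $h\ge (\text{number of variables})-1$ required by the lemma), we obtain
$$\det(G_h)=S_{h-k+1}(\beta_1,\beta_2,\dots,\beta_k)\cdot\det(M_\Lambda'),$$
where $M_\Lambda'$ is the ordinary $k\times k$ Vandermonde matrix in $\beta_1,\dots,\beta_k$, whose determinant is $D(\beta_1,\dots,\beta_k)=\prod_{1\le i<j\le k}(\beta_j-\beta_i)$. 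Since the $\beta_i$ are distinct, this Vandermonde determinant is nonzero, so the full determinant vanishes if and only if the complete symmetric polynomial factor $S_{h-k+1}(\beta_1,\dots,\beta_k)$ vanishes.

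Combining these observations finishes the argument: every $k\times k$ submatrix is nonsingular precisely when $S_{h-k+1}(\beta_1,\dots,\beta_k)\ne 0$ for every $k$-subset of distinct elements of $\Lambda$, which is the claimed condition. I do not expect any serious obstacle here, since the heavy lifting—the evaluation of the generalized Vandermonde determinant—has already been packaged into Lemma~\ref{detgh}. The only points requiring a little care are verifying that the hypothesis $h\ge k-1$ needed to invoke Lemma~\ref{detgh} is met (which follows from $r\ge 1$) and noting that $S_t$ with $t=h-k+1\ge r\ge 1$ is the relevant positive-degree complete symmetric polynomial, consistent with the convention $S_0=1$ and $S_t=0$ for $t<0$ fixed in the preliminaries.
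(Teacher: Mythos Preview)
Your proposal is correct and follows essentially the same approach as the paper: apply Lemma~\ref{mdsnonsingular} to reduce the MDS property to nonsingularity of every $k\times k$ submatrix, then invoke Lemma~\ref{detgh} to factor each such determinant as $S_{h-k+1}(\beta_1,\dots,\beta_k)$ times the (nonzero) Vandermonde determinant. Your write-up is in fact a bit more explicit than the paper's in verifying the hypothesis $h\ge k-1$ of Lemma~\ref{detgh} and in noting that the Vandermonde factor is nonzero because the $\beta_i$ are distinct.
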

\begin{proof}
By Lemma \ref{mdsnonsingular}, $C_{h,k}$ is MDS if and only if each $k\times k$ submatrix of $G_{h,k}$ is non-singular. Let $B$ be an arbitrary $k\times k$ submatrix of $G_{h,k}$. Let 
  $$B=\begin{bmatrix}
    1&1&\dots&1\\
    \beta_{1}&\beta_{2}&\dots&\beta_{_k}\\
    \vdots&\vdots&\vdots&\vdots\\
    \beta_{1}^{k-2}&\beta_{2}^{k-2}&\dots&\beta_{k}^{k-2}\\
    \beta_{1}^h&\beta_{2}^h&\dots&\beta_{k}^h
\end{bmatrix}$$ for some $\{\beta_1,\beta_2,\dots,\beta_k\}\subseteq \Lambda$. Then by Lemma  \ref{detgh}, $\det(B)\neq 0$ if and only if $S_{h-k+1}(\beta_1,\beta_2,\dots,\beta_k)\neq0$. This completes the proof.
\end{proof}

Next, we show that the family of codes $C_{h,k}$ constitutes a class of MDS codes that are not generalized Reed–Solomon codes. The proof of their non-GRS nature relies on the Schur product method based on Lemma \ref{lemmanongrs}. In \cite{Han2024}, the authors studied MDS and NMDS code for $h=k$, but not investigated non-GRS properties.
\begin{proposition}   
Let $3\leq k\leq \frac{n-2}{2}$, $h=k$ (that is $r=1)$ be positive integers, and $C_{h,k}$ be an MDS code. Then $C_{h,k}$ is a non-GRS MDS code.   
\end{proposition}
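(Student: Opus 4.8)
The plan is to invoke the Schur-square characterisation of GRS codes from Lemma~\ref{lemmanongrs}(2). Since $C_{h,k}$ is MDS of dimension $k$ and the hypothesis $k\le\frac{n-2}{2}$ gives $k\le\frac{n-1}{2}$, that lemma applies: a GRS code of this dimension would satisfy $\dim(C_{h,k}\star C_{h,k})=2k-1$. Hence it suffices to prove the strict inequality $\dim(C_{h,k}\star C_{h,k})\ge 2k$.

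Write $\Lambda^i=(\alpha_1^i,\dots,\alpha_n^i)$. With $h=k$ the rows of $G_{h,k}$ are exactly the vectors $\Lambda^i$ for $i$ in the exponent set $E=\{0,1,\dots,k-2\}\cup\{k\}$, and since $\Lambda^i\star\Lambda^j=\Lambda^{i+j}$, the Schur square is spanned by $\{\Lambda^s : s\in E+E\}$. First I would compute the sumset $E+E$ by splitting into the three cases $a,b\le k-2$, exactly one summand equal to $k$, and both equal to $k$, obtaining
$$E+E=\{0,\dots,2k-4\}\cup\{k,\dots,2k-2\}\cup\{2k\}.$$
The key point is that because $k\ge 3$ one has $k\le 2k-3$, so the first two blocks abut with no gap and their union is $\{0,1,\dots,2k-2\}$; adjoining $2k$ then yields $E+E=\{0,1,\dots,2k-2\}\cup\{2k\}$, a set of exactly $2k$ distinct exponents (only $2k-1$ is absent).

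It then remains to show the $2k$ vectors $\{\Lambda^s : s\in E+E\}$ are linearly independent. Here I would embed them in the family $\Lambda^0,\dots,\Lambda^{2k}$: choosing any $2k+1$ of the $n$ coordinates (possible since $2k\le n-2$ forces $n\ge 2k+2$) produces a square Vandermonde matrix in distinct $\alpha_j$, which is nonsingular, so these $2k+1$ vectors, and a fortiori the $2k$-element subset indexed by $E+E$, are linearly independent. Consequently $\dim(C_{h,k}\star C_{h,k})=2k>2k-1$, and Lemma~\ref{lemmanongrs}(2) forces $C_{h,k}$ to be non-GRS.

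The computation is essentially routine; the one place demanding care is the sumset step, where I must verify that no gap opens in $\{0,\dots,2k-2\}$, and this is precisely where $k\ge 3$ enters (for $k=2$ the block $\{0,\dots,2k-4\}=\{0\}$ leaves $1$ uncovered, $E+E$ has only $2k-1$ elements, and the argument collapses). The conceptual heart is that a GRS Schur square fills the consecutive window $\{0,\dots,2k-2\}$ of size $2k-1$, whereas deleting the row $\Lambda^{k-1}$ while retaining $\Lambda^{k}$ still fills that window yet injects the surplus power $\Lambda^{2k}=\Lambda^k\star\Lambda^k$, yielding one extra dimension that rules out the GRS property.
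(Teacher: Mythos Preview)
Your proof is correct and follows essentially the same approach as the paper: both compute the Schur square as spanned by $\{\Lambda^i:0\le i\le 2k-2\}\cup\{\Lambda^{2k}\}$ and conclude $\dim(C_{h,k}\star C_{h,k})=2k$ via Lemma~\ref{lemmanongrs}. Your write-up is more explicit than the paper's (which simply asserts the spanning set and dimension), in particular spelling out the sumset computation, isolating exactly where $k\ge 3$ is needed, and giving the Vandermonde argument for linear independence.
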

\begin{proof}
 Denote $\Lambda^i=(\alpha_1^i,\alpha_2^i,\dots,\alpha_n^i)$. Then, the Schur square of $C_{h,k}$ is generated by 
$$T=\{\Lambda^i: 0\leq i \leq 2k-2\}\cup\{\Lambda^{2k}\}.$$
Thus, $\dim(C_{h,k}\star C_{h,k})=2k$. Hence, by Lemma \ref{lemmanongrs}, $C_{h,k}$ is non-GRS.     
\end{proof}

\begin{theorem}\label{thmnongrshk1}
Let $4\leq k\leq \frac{n-1}{2}$, $h=(k-1)+r$  with $2\leq r\leq k-2$, and $C_{h,k}$ be an MDS code.  Then $C_{h,k}$ is a non-GRS MDS code.   
\end{theorem}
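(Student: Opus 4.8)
The plan is to follow the Schur square method used in Theorem \ref{thmnongrsk1k2} and show that $\dim(C_{h,k}\star C_{h,k})\geq 2k$, which strictly exceeds $2\dim(C_{h,k})-1=2k-1$. Since $C_{h,k}$ is MDS with $\dim(C_{h,k})=k\leq\frac{n-1}{2}$, Lemma \ref{lemmanongrs}(2) then forces $C_{h,k}$ to be non-GRS.

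First I would record the generating rows of $G_{h,k}$ as the vectors $\Lambda^0,\Lambda^1,\dots,\Lambda^{k-2},\Lambda^h$, where $\Lambda^i=(\alpha_1^i,\dots,\alpha_n^i)$ and $h=(k-1)+r$. Since $\Lambda^i\star\Lambda^j=\Lambda^{i+j}$, the Schur square $C_{h,k}\star C_{h,k}$ is spanned by the vectors $\Lambda^e$ with $e$ ranging over the exponent set
$$E=\{i+j: 0\leq i,j\leq k-2\}\cup\{h+i: 0\leq i\leq k-2\}\cup\{2h\}=\{0,1,\dots,2k-4\}\cup\{h,h+1,\dots,h+k-2\}\cup\{2h\}.$$

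The key bookkeeping step is to show that $E$ contains the full consecutive block $\{0,1,\dots,2k-1\}$. The first set contributes $\{0,\dots,2k-4\}$. For the remaining exponents $2k-3,2k-2,2k-1$ I would solve $h+i=k-1+r+i$ for $i$, obtaining $i=k-2-r,\,k-1-r,\,k-r$ respectively; the hypothesis $2\leq r\leq k-2$ is exactly what guarantees $0\leq i\leq k-2$ in all three cases (in particular $r\geq2$ is what makes $i=k-r\leq k-2$, so that $2k-1\in E$). Thus $\Lambda^0,\Lambda^1,\dots,\Lambda^{2k-1}$ all lie in $C_{h,k}\star C_{h,k}$.

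Finally I would argue that these $2k$ vectors are linearly independent: a nontrivial dependence would produce a nonzero polynomial of degree at most $2k-1$ vanishing at all $n$ distinct points $\alpha_1,\dots,\alpha_n$, which is impossible because $k\leq\frac{n-1}{2}$ gives $n\geq 2k+1>2k-1$. Hence $\dim(C_{h,k}\star C_{h,k})\geq 2k$, and the conclusion follows from Lemma \ref{lemmanongrs}(2). I do not expect a genuine obstacle here; the only real care is the exponent bookkeeping, which is precisely where the hypotheses $2\leq r\leq k-2$ (no gap in $E$ and reachability of the exponent $2k-1$) and $k\leq\frac{n-1}{2}$ (linear independence of the chosen vectors) enter.
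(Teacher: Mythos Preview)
Your proof is correct and follows essentially the same approach as the paper: both compute the exponent set of the Schur square, observe that it contains the consecutive block $\{0,1,\dots,2k-1\}$ (which is where the hypothesis $2\le r\le k-2$ enters), and conclude via Lemma~\ref{lemmanongrs}(2). The only cosmetic difference is that the paper first collapses the spanning set to $\{\Lambda^i:0\le i\le 2k-3+r\}\cup\{\Lambda^{2k-2+2r}\}$ before reading off the subset, whereas you verify the exponents $2k-3,2k-2,2k-1$ one at a time; your linear independence justification via the polynomial-root count (using $n\ge 2k+1$) is in fact a bit more explicit than the paper's invocation of $2k-1\le q-2$.
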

\begin{proof}
Denote $\Lambda^i=(\alpha_1^i,\alpha_2^i,\dots,\alpha_n^i)$.  
Since $2\leq r\leq k-2$,  the Schur square of $C_{h,k}$ is generated by 
$$T=\{\Lambda^i: 0\leq i \leq 2k-3+r\}\cup\{\Lambda^{2k-2+2r}\}.$$
Observe that $\{\Lambda^i: 0\leq i \leq 2k-1\}\subseteq T$ is linearly independent as $2k-1\leq q-2$, that is, $\dim(C_{h,k}\star C_{h,k})\geq 2k$. Hence, by Lemma \ref{lemmanongrs}, $C_{h,k}$ is non-GRS.    
\end{proof}

  \begin{theorem}\label{thmnongrshk2}
(1) Let $k=3\leq \frac{n}{2}$, $h=(k-1)+r$ with $k-1\leq r\leq \frac{q-k-3}{2}$ and $C_{h,k}$ be an MDS code. Then $C_{h,k}$ is a non-GRS MDS code.\\
(2) Let $4\leq k\leq \frac{n}{2}$, $h=(k-1)+r$  with $k-1 \le r \le q-k-3$,  and $C_{h,k}$ be an MDS code.  Then $C_{h,k}$ is a non-GRS MDS code.   
\end{theorem}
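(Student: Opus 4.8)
The plan is to prove both parts using the Schur square characterization of GRS codes from Lemma~\ref{lemmanongrs}(2), exactly as in the preceding non-GRS theorems. First I would introduce the notation $\Lambda^i=(\alpha_1^i,\alpha_2^i,\dots,\alpha_n^i)$ and identify the generating set of the Schur square $C_{h,k}\star C_{h,k}$. The code $C_{h,k}$ is generated by the rows $\Lambda^0,\Lambda^1,\dots,\Lambda^{k-2}$ together with $\Lambda^h$. Taking pairwise Schur products of these generators produces the exponents $\{i+j : 0\le i,j\le k-2\}\cup\{i+h : 0\le i\le k-2\}\cup\{2h\}$. The first set gives every exponent in $\{0,1,\dots,2k-4\}$; the second gives $\{h,h+1,\dots,h+k-2\}$; and the last contributes the single exponent $2h$. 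So the Schur square is generated by $T=\{\Lambda^i : 0\le i\le 2k-4\}\cup\{\Lambda^i : h\le i\le h+k-2\}\cup\{\Lambda^{2h}\}$.

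The key observation, to be verified in each case, is that $T$ contains a subfamily of at least $2k$ vectors $\Lambda^{i}$ whose exponents are distinct and bounded above by $q-2$; since the $\alpha_j$ are distinct elements of $\mathbb{F}_q$, any such collection of rows forms a submatrix of a Vandermonde-type matrix and is therefore linearly independent. This forces $\dim(C_{h,k}\star C_{h,k})\ge 2k>2k-1=2\dim(C_{h,k})-1$, and then Lemma~\ref{lemmanongrs}(2) immediately yields that $C_{h,k}$, being MDS with $\dim\le \frac{n-1}{2}$ (guaranteed by the hypothesis $k\le \frac{n}{2}$, hence $2k\le n$), cannot be GRS. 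For part (2) with $4\le k$ and $k-1\le r\le q-k-3$, I would note that $h=k-1+r\ge 2k-2$, so the blocks $\{0,\dots,2k-4\}$ and $\{h,\dots,h+k-2\}$ may be disjoint; one counts $2k-3$ exponents from the first block and must harvest enough additional distinct exponents from the upper block and from $2h$. The upper bound $r\le q-k-3$ ensures the largest exponent $2h=2k-2+2r\le q-2$ (or at least that $2k$ of the exponents lie in $\{0,1,\dots,q-2\}$), so all chosen rows remain linearly independent over $\mathbb{F}_q$.

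The main obstacle, and the step requiring the most care, is the exponent bookkeeping when $h$ is large: once $h$ exceeds $q-1$, exponents can collapse modulo the multiplicative order of field elements, and more subtly, when $k=3$ the first block $\{0,\dots,2k-4\}=\{0,1,2\}$ supplies only three exponents, so one must rely almost entirely on the upper block $\{h,h+1,\dots,h+k-2\}$ together with $2h$ to reach $2k=6$ distinct exponents. This is precisely why part~(1) imposes the tighter restriction $r\le \frac{q-k-3}{2}$ rather than $r\le q-k-3$: the factor of $\frac12$ guarantees $2h\le q-2$, keeping the exponent $2h$ genuinely available and distinct from the smaller ones. I would handle $k=3$ separately, explicitly listing the six exponents $\{0,1,2,h,h+1,2h\}$ and checking they are distinct and at most $q-2$ under the stated bound on $r$, while for $k\ge 4$ in part~(2) the generous first block $\{0,\dots,2k-4\}$ already gives most of the needed independence, so the verification that $2k$ distinct admissible exponents appear is routine.

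The expected difficulty is therefore entirely combinatorial rather than algebraic: no new determinant or symmetric-function identity is needed beyond what Lemma~\ref{lemmanongrs} and the MDS hypothesis provide. Once the count of distinct exponents bounded by $q-2$ is confirmed in each regime, the conclusion follows mechanically, and the two parts differ only in how the constraint on $r$ is calibrated to keep the top exponent within range.
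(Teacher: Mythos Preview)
Your proposal is correct and follows the paper's approach exactly: compute the Schur-square generators as powers $\Lambda^i$, exhibit $2k$ of them with distinct exponents bounded so as to be linearly independent, and apply Lemma~\ref{lemmanongrs}(2). One small correction for part~(2): the bound $r\le q-k-3$ does \emph{not} force $2h\le q-2$ (in fact $2h$ can be as large as $2q-8$); the paper instead discards $\Lambda^{2h}$ and takes only the three exponents $h,h+1,h+2\le q-2$ from the upper block together with the $2k-3$ exponents $0,\dots,2k-4$ to reach $2k$, which is precisely what your parenthetical hedge anticipates.
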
  
\begin{proof}
 We denote $\Lambda^i=(\alpha_1^i,\alpha_2^i,\dots,\alpha_n^i)$.
(1) For $k=3$, the Schur square of $C_{h,k}$ is generated by 
\begin{align*}
 T= \{\Lambda^0, \Lambda^1, \Lambda^2, \Lambda^h, \Lambda^{h+1},\Lambda^{2h}\}.  
\end{align*}
Moreover, since $2h=q-k+1\leq q-2$, the vectors in $T$ are linearly independent.  Thus, $\dim(C_{h,k}\star C_{h,k})=6=2k$. Hence, by Lemma \ref{lemmanongrs}, $C_{h,k}$ is non-GRS. \\ 
(2) For $k\geq 4$,  the Schur square of $C_{h,k}$  is generated by 
  $$T=T_1\cup T_2 \cup \{\Lambda^{2k-2+2r}\},$$
where $T_1=\{\Lambda^i: 0\leq i \leq 2k-4\}$ and $T_2=\{\Lambda^{l}: (k-1)+r\leq l\leq 2k-3+r\}$. Observe that the vectors in $T_1$ are linearly independent as $2k-4< q-2$. Further, $\{\Lambda^{k-1+r},\Lambda^{k+r},\Lambda^{k+r+1}\}\subseteq T_2$ is linearly independent as $k+r+1\leq q-2$. Thus, $\dim(C_{h,k}\star C_{h,k})\geq (2k-3)+3=2k$. Hence, by Lemma \ref{lemmanongrs}, $C_{h,k}$ is non-GRS.  
\end{proof}

\begin{remark}\label{notequivalent}
It is important to note that if two $[n,k]$ codes $C_1$ and $C_2$ are equivalent, then their Schur squares of $C_1$ and $C_2$ are also equivalent.  For $5\leq k\leq \frac{n-2}{2}$ with $n\leq q$, let $C_{h,k}$ be the code generated by $G_{h,k}$, where $h=(k-1)+r$  with $2\leq r\leq k-2$. Analogous to Theorem \ref{thmnongrshk1},  the dimension of the Schur square of  $C_{h,k}$ is at least $2k+1$. By contrast, in \cite{Liu2025, Li2024}, the authors constructed non-GRS MDS codes whose Schur squares have dimension $2k$ over the same range. Hence, the class of codes $C_{h,k}$ contains non-GRS MDS codes that are not equivalent to codes presented in \cite{Liu2025,Li2024}.      
\end{remark}

Using techniques similar to those in  \cite{Jin2024}, we now present an explicit construction of non-GRS MDS codes.  
\begin{theorem}\label{conschk}
Let $4\leq k\leq \frac{n-1}{2}$, $h=k+1$, $q=p^m$, $\mathbb{F}_q=\mathbb{F}_p[\gamma]$ and $t=\lfloor \frac{m-1}{2}\rfloor$. Let $n\leq p^t$ and $\Lambda=\{\alpha_i=\gamma^t+\sum_{j=0}^{t-1} a_{ij}\gamma^j: 1\leq i\leq n, a_{i,j}\in \mathbb{F}_p\}$. If $p\nmid \frac{k(k+1)}{2}$, then the code $C_{h,k}$ is a non-GRS MDS code over $\mathbb{F}_q$.    
\end{theorem}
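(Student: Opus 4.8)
The plan is to reduce the statement to an application of the MDS criterion of Theorem \ref{thmmdshk} together with the non-GRS result of Theorem \ref{thmnongrshk1}. Since $h=k+1$, we have $h-k+1=2$ and $r:=h-(k-1)=2$. By Theorem \ref{thmmdshk}, the code $C_{h,k}$ is MDS precisely when $S_2(\beta_1,\dots,\beta_k)\neq 0$ for every choice of distinct $\beta_1,\dots,\beta_k\in\Lambda$. Moreover, once MDS is established, the hypotheses $4\le k\le\frac{n-1}{2}$ and $r=2$ (so that $2\le r\le k-2$, using $k\ge 4$) place us exactly in the scope of Theorem \ref{thmnongrshk1}, which then immediately gives that $C_{h,k}$ is non-GRS. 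Thus the entire content of the theorem is the verification that $S_2$ never vanishes on a $k$-subset of $\Lambda$.

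To verify this, I would exploit that each $\alpha_i=g_i(\gamma)$ is the evaluation at $\gamma$ of a monic polynomial $g_i\in\mathbb{F}_p[x]$ of degree $t$ (the $\alpha_i$ are distinct since $g_i-g_j$ has degree $<t<m$, hence does not vanish at $\gamma$). Fix indices $i_1,\dots,i_k$ and form the univariate polynomial $P(x)=S_2\bigl(g_{i_1}(x),\dots,g_{i_k}(x)\bigr)\in\mathbb{F}_p[x]$, so that $S_2(\beta_1,\dots,\beta_k)=P(\gamma)$ with $\beta_l=g_{i_l}(\gamma)$. Writing $S_2$ as the complete homogeneous polynomial of degree $2$, namely the sum of all $\binom{k+1}{2}=\frac{k(k+1)}{2}$ degree-$2$ monomials in $k$ variables, and substituting the monic degree-$t$ polynomials $g_{i_l}$, each such monomial becomes a monic polynomial of degree $2t$. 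Hence $\deg P\le 2t$, and the coefficient of $x^{2t}$ equals the number of these monomials, i.e. $\frac{k(k+1)}{2}$.

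The key step is now the hypothesis $p\nmid\frac{k(k+1)}{2}$: it guarantees that this leading coefficient is nonzero in $\mathbb{F}_p$, so $P$ has exact degree $2t$ and in particular is a nonzero polynomial. Since $t=\lfloor\frac{m-1}{2}\rfloor$ gives $2t\le m-1<m=[\mathbb{F}_q:\mathbb{F}_p]$, and $\{1,\gamma,\dots,\gamma^{m-1}\}$ is an $\mathbb{F}_p$-basis of $\mathbb{F}_q$, a nonzero polynomial of degree $<m$ cannot vanish at $\gamma$; therefore $P(\gamma)=S_2(\beta_1,\dots,\beta_k)\neq 0$. As the $k$-subset was arbitrary, $C_{h,k}$ is MDS by Theorem \ref{thmmdshk}, and then non-GRS by Theorem \ref{thmnongrshk1}. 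I expect the only delicate point to be the leading-coefficient count, confirming that exactly $\frac{k(k+1)}{2}$ monomials contribute to $x^{2t}$ so that no cancellation can lower the degree, and that this count is precisely the quantity whose non-vanishing modulo $p$ is assumed.
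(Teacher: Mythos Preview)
Your proposal is correct and follows essentially the same approach as the paper: reduce to Theorem~\ref{thmmdshk} with $S_{h-k+1}=S_2$, view each $\alpha_i$ as the value at $\gamma$ of a monic degree-$t$ polynomial over $\mathbb{F}_p$, and observe that the resulting expression is a polynomial in $\gamma$ of degree $2t<m$ with leading coefficient $\frac{k(k+1)}{2}$, then apply Theorem~\ref{thmnongrshk1}. The only cosmetic difference is that the paper first rewrites $S_2=\sigma_1^2-\sigma_2$ and obtains the leading coefficient as $k^2-\binom{k}{2}$, whereas you count the $\binom{k+1}{2}$ monomials of $S_2$ directly; both yield the same value.
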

\begin{proof}
 Let $(\beta_1,\beta_2,\dots,\beta_k)=(\alpha_{i_1},\alpha_{i_2},\dots,\alpha_{i_k})$ for $\{i_1,i_2,\dots,i_k\}\subset\{1,2,\dots,n\}$. Then
 \begin{align*}
     S_{h-k+1}(\beta_1,\beta_2,\dots,\beta_k)=&S_2(\beta_1,\beta_2,\dots,\beta_k)=\sigma_1^2(\beta_1,\beta_2,\dots,\beta_k)-\sigma_2(\beta_1,\beta_2,\dots,\beta_k)\\
     =&\left (\sum_{1\leq j_1\leq k}\alpha_{i_{j_1}}\right)^2-\sum_{1\leq j_1<j_2\leq k}\alpha_{i_{j_1}}\alpha_{i_{j_2}}\\
     =& \left(k^2-\binom{k}{2}\right)\gamma^{2t}+\sum_{j=0}^{2t-1}c_{j}\gamma^j \ \  \text{ for some } c_{j}\in \mathbb{F}_p\\
     =& \frac{k(k+1)}{2}\gamma^{2t}+\sum_{j=0}^{2t-1}c_{j}\gamma^j\neq 0.
 \end{align*}
 Hence, by Theorem \ref{thmmdshk} and Theorem \ref{thmnongrshk1}, $C_{h,k}$ is a non-GRS MDS code.
\end{proof}

Next, we present some examples of non-GRS MDS codes with larger length. 

\begin{example}
 Let $q=37,n=18, h=21$, $k=4$. For $\Lambda=\{3^i: 1\leq i\leq n\}$,  the code $C_{21,4}$ generated by $G_{21,4}$ is MDS by Theorem \ref{thmmdshk}. Also, $h=(k-1)+r$, where $r=18\leq q-k-3=30$. Hence, by Theorem \ref{thmnongrshk2}, $C_{21,4}$ is non-GRS MDS. Moreover, we get a non-GRS MDS code of length $n+1$ over $\mathbb{F}_{q}$ by adding a column  $(0,0,1)^T$ in  $G_{21,4}$. 
\end{example}

\begin{example}\label{example1}
    Let $n=q=2^7$, $h=16$, $k=3$ and  $\Lambda=\mathbb{F}_q$.
Let $C_{16,3}$ be the code generated by 
\begin{equation*}\label{opolymatrix}
   G_{16,3}= \begin{bmatrix}
    1&1&1&\dots &1\\
    \alpha_1&\alpha_2&\alpha_3&\dots&\alpha_q\\
\alpha_1^h&\alpha_2^h&\alpha_3^h&\dots&\alpha_q^h
\end{bmatrix}.
\end{equation*}
Then $S_{14}(\beta_1,\beta_2,\beta_3)\neq 0$ for all  distinct $\{\beta_1,\beta_2,\beta_3\}\subset\mathbb{F}_q$. Thus, by Theorems \ref{thmmdshk} and \ref{thmnongrshk2},  $C_{16,3}$ is a non-GRS MDS code of length $q$. Moreover, we get an MDS code of length $q+2$ over $\mathbb{F}_{q}$ by adding two columns $(0,1,0)^T$ and $(0,0,1)^T$ in  $C_{16,3}$. The corresponding $o$-monomial $x^{16}$ is a  translation $o$-monomial (see \cite{Ball2019}).
\end{example}
In \cite{Han2024}, the authors studied MDS and NMDS codes by taking $h=k$, that is, $r=1$. In Table \ref{tab1}, we construct several non-GRS MDS codes over $\mathbb{F}_q$ of length $(q-1)/2$, using Theorem \ref{thmnongrshk2} by taking $r>k-2$.
\begin{table}[h]
    \centering
    \begin{tabular}{|c|c|c|c|c|c|}
    \hline
      \hspace{0.5 cm}  $q$ &\hspace{0.5 cm}  $n$ &\hspace{0.5 cm} $k$ &\hspace{0.5 cm} $r$ &\hspace{0.5 cm} $\gamma$ &\hspace{0.5 cm} \textbf{Parameters}\\
      \hline
          $37$ & $18$  & $7$   & $18$ & $3$ & $[18,7,12]$\\
          \hline
      $41$ & $20$  & $8$   & $20$ & $2$ & $[20,8,13]$\\
      \hline
          $53$ & $26$  & $11$   & $26$ & $4$ & $[26,11,16]$\\
          \hline
          $61$ & $30$  & $13$   & $30$ & $4$ & $[30,13,18]$\\
          \hline
          $73$ & $36$  & $16$   & $36$ & $6$ & $[36,16,21]$\\
          \hline
          $89$ & $44$  & $20$   & $44$ & $5$ & $[44,20,25]$\\
          \hline
      
    \end{tabular}
    \caption{Non-GRS MDS codes generated by $G_{h,k}$, using Theorem \ref{thmnongrshk2}, where $h=(k-1)+r$ and $\Lambda=\{\gamma^i: 1\leq i\leq n\}$.}
    \label{tab1}
\end{table}

\begin{remark}
Theorem \ref{conschk} shows that non-GRS MDS codes up to length $\left (\frac{q}{p}\right )^{1/2}$ can be constructed using the generator matrices $G_{h,k}$. Moreover, Example \ref{example1} shows that non-GRS MDS codes of length $q+2$ over $\mathbb{F}_{q}$, for $q=2^m$ and positive integer $m$, can be constructed using Theorem \ref{thmmdshk}. In addition, the examples in Table \ref{tab1} demonstrate that the matrix $G_{h,k}$ can be utilized to construct non-GRS MDS codes of length $(q-1)/2$ over $\mathbb{F}_q$ ($q$ is odd). 
\end{remark}

\subsection{Self-orthogonal and self-dual properties of code $C_{h,k}$ }\label{orthogonalproperties}
In this subsection, we investigate the self-orthogonality and self-duality of the codes $C_{h,k}$. 

Let  ${\bm{v}}=(v_1,v_2,\dots,v_n)\in (\mathbb{F}_q^*)^{n}$. Consider a matrix $$G_{h,{\bm{v}}}=G_{h,k}\cdot \text{diag}[v_1,v_2,\dots,v_n].$$
We denote the linear code generated by $G_{h,{\bm{v}}}$ by $C_{h,{\bm{v}}}$. The codes $C_{h,\bm{v}}$ and $C_{h,k}$ are equivalent.

\begin{theorem}\label{thmsoh1}
Let $k\geq 3$, $h=(k-1)+r$ with $1\leq r\leq k-2$,  $\bm{v}=(v_1,v_2,\dots,v_n)\in (\mathbb{F}_q^*)^{n}$, and $u_i$, $1\leq i\leq n$ be as in Eq. \ref{ui}. Then $C_{h,\bm{v}}$ is self-orthogonal if and only if there exists a polynomial $f(x)=\sum_{j=0}^{n-2k-r+1}f_jx^j\in \mathbb{F}_q[x]$  such that the following hold: 
  \begin{enumerate}
      \item $v_i^2=u_if(\alpha_i)$ for $1\leq i \leq n$.
      \item $\sum_{j=s}^{s+r}f_jS_{j-s}=0$, where $s=n-2k-2r+1$ and $f_j=0$ if $j<0$.
  \end{enumerate}
\end{theorem}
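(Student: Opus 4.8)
The plan is to follow the template of Theorem~\ref{thmsok1k2}, reducing self-orthogonality to a Vandermonde-type linear system via Lemma~\ref{selflemma} and then reading off the constraints with Lemma~\ref{paritylemma}. First I would invoke Lemma~\ref{selflemma}: the code $C_{h,\bm{v}}$ is self-orthogonal if and only if $G_{h,\bm{v}}G_{h,\bm{v}}^T=0$, i.e. $\sum_{i=1}^n v_i^2\alpha_i^{e+e'}=0$ for every pair of row-exponents $e,e'\in\{0,1,\dots,k-2,h\}$. The key preliminary step is to determine precisely which exponents $l=e+e'$ occur. The products among $\{0,\dots,k-2\}$ give $\{0,\dots,2k-4\}$; the products of these with $h=(k-1)+r$ give $\{k-1+r,\dots,2k-3+r\}$; and $h+h$ gives the single exponent $2h=2k-2+2r$. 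Here the hypothesis $r\le k-2$ is exactly what guarantees $k-1+r\le 2k-3$, so the first two blocks merge without a gap into the consecutive range $\{0,1,\dots,2k-3+r\}$, while $2h$ is isolated above it (since $2h-(2k-3+r)=r+1\ge 2$). Thus self-orthogonality is equivalent to
$$\sum_{i=1}^n v_i^2\alpha_i^{l}=0\quad\text{for } 0\le l\le 2k-3+r,\qquad\text{and}\qquad \sum_{i=1}^n v_i^2\alpha_i^{2h}=0.$$

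For the forward direction, I would treat the consecutive block as a homogeneous system in the unknowns $x_i=v_i^2$. Its coefficient matrix has the $2k-2+r$ distinct exponents $0,\dots,2k-3+r$, hence rank $2k-2+r$, so the solution space has dimension $n-2k-r+2$. By Lemma~\ref{paritylemma}, each vector $(u_1\alpha_1^{j},\dots,u_n\alpha_n^{j})$ with $0\le j\le n-2k-r+1$ is a solution, because $j+l\le (n-2k-r+1)+(2k-3+r)=n-2$ keeps every relevant power within the vanishing range of the lemma; these $n-2k-r+2$ vectors are linearly independent and therefore form a basis. Consequently there are scalars $f_0,\dots,f_{n-2k-r+1}$ with $v_i^2=u_i\sum_{j}f_j\alpha_i^{j}=u_if(\alpha_i)$, giving condition~(1), where $f(x)=\sum_{j=0}^{n-2k-r+1}f_jx^j$.

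It remains to translate the single leftover equation $\sum_i v_i^2\alpha_i^{2h}=0$. Substituting $v_i^2=u_if(\alpha_i)$ and expanding gives $\sum_{j}f_j\sum_i u_i\alpha_i^{j+2h}$, and Lemma~\ref{paritylemma} evaluates the inner sum as $S_{j+2h-n+1}$. A short index computation shows $j+2h-n+1=j-s$ with $s=n-2k-2r+1$, and the top index of $f$ satisfies $s+r=n-2k-r+1=\deg f$; since $S_t=0$ for $t<0$, the equation collapses to $\sum_{j=s}^{s+r}f_jS_{j-s}=0$, which is condition~(2). The converse is the same computation run backwards: assuming (1) and (2), Lemma~\ref{paritylemma} makes every equation $\sum_i v_i^2\alpha_i^l=0$ for $0\le l\le 2k-3+r$ hold automatically (the bound $j+l\le n-2$ again), while (2) supplies the $l=2h$ equation, so $G_{h,\bm{v}}G_{h,\bm{v}}^T=0$ and $C_{h,\bm{v}}$ is self-orthogonal.

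The routine parts are the two applications of Lemma~\ref{paritylemma}; the only real care lies in the bookkeeping of the exponent set. The main obstacle I anticipate is verifying that the first two exponent blocks genuinely merge into a single consecutive interval and that $2h$ sits strictly above it --- this is where the constraint $1\le r\le k-2$ is used, and getting the arithmetic of $s$, of the degree $n-2k-r+1$, and of the upper limit $s+r$ mutually consistent is what makes the clean statement of condition~(2) come out correctly.
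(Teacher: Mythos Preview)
Your proposal is correct and follows essentially the same approach as the paper's proof: reduce to $G_{h,\bm v}G_{h,\bm v}^T=0$ via Lemma~\ref{selflemma}, solve the consecutive block of equations using the basis $(u_i\alpha_i^j)_i$ from Lemma~\ref{paritylemma}, and then convert the leftover $l=2h$ equation into condition~(2). If anything, you are slightly more explicit than the paper in justifying why the exponent set is the interval $\{0,\dots,2k-3+r\}$ together with the isolated point $2h$, which is precisely where the hypothesis $1\le r\le k-2$ enters.
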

\begin{proof}
    By Lemma \ref{selflemma}, $C_{h,\bm{v}}$ is self-orthogonal if and only if $G_{h,\bm{v}}G_{h,\bm{v}}^T=0$, that is,
    \begin{align}
          &\sum_{i=1}^nv_i^2\alpha_i^l=0  \text{ for all } l \text{ with } 0\leq l\leq 2k-3+r, \label{eq1hv}\\
            &\sum_{i=1}^nv_i^2\alpha_i^{2k-2+2r}=0.\label{eq2hv}    
   \end{align}
   Let $C_{h,\bm{v}}$ be self-orthogonal, that is, Eqs. \ref{eq1hv}, \ref{eq2hv} hold. Then, Eq. \ref{eq1hv} implies that  $(v_1^2,v_2^2,\dots,v_n^2)$ is a nonzero solution of the following system of linear equations
 \begin{align} \label{eq3hv}
           &\sum_{i=1}^n\alpha_i^l x_i=0  \text{ for } 0\leq l\leq 2k-3+r. 
        \end{align}
The rank of the coefficient matrix of Eq. \ref{eq3hv} is $2k-2+r$. Therefore, the dimension of the solution space of Eq. \ref{eq3hv} is $n-2k-r+2$. Also, by Lemma \ref{paritylemma},
\begin{align*}
\begin{split}
    (u_1,u_2,\dots,u_n), (u_1\alpha_1,u_2\alpha_2,\dots,u_n\alpha_n),\dots,(u_1\alpha_1^{n-2k-r+1},u_2\alpha_2^{n-2k-r+1},\dots,u_n\alpha_n^{n-2k-r+1})
    \end{split}
    \end{align*}
   are $n-2k-r+2$ linearly independent solutions of Eq. \ref{eq3hv}. This implies that there exist scalars $f_0,f_1,\dots,f_{n-2k-r+1}$ such that 
   \begin{align*}
   \begin{split}
       v_i^2=\sum_{j=0}^{n-2k-r+1} u_if_j\alpha_i^j.
        \end{split}
   \end{align*}
Denote $f(x)=\sum_{j=0}^{n-2k-r+1}f_jx^j$. Then  $v_i^2= u_if(\alpha_i)$ for $1\leq i\leq n$. Furthermore, $v_i^2$, $1\leq i\leq n$, also satisfies Eq. \ref{eq2hv}, therefore, we have
\begin{align*}
0=\sum_{i=1}^nu_if(\alpha_i)\alpha_i^{2k-2+2r}=\sum_{j=0}^{n-2k-r+1}f_j\sum_{i=1}^nu_i\alpha_i^{2k-2+2r+j}=\sum_{j=0}^{n-2k-r+1}f_jS_{2k-2+2r+j-n+1}\\
=\sum_{j=s}^{s+r}f_jS_{j-s}=0, \text{ where } s=n-2k-2r+1,
\end{align*}
using Lemma \ref{paritylemma}. This completes the necessary part. The proof of the converse part is similar to that of Theorem \ref{thmsok1k2}.
\end{proof}

\begin{theorem}\label{thmsoh2}
 Let $k\geq 3$, $h=(k-1)+r$ with $k-1\leq r\leq q-k-1$, $\bm{v}=(v_1,v_2,\dots,v_n)\in (\mathbb{F}_q^*)^{n}$, and $u_i$, $1\leq i\leq n$ be as in Eq. \ref{ui}. Then $C_{h,\bm{v}}$ is self-orthogonal if and only if there exists a polynomial $f(x)=\sum_{j=0}^{n-2k+2}f_jx^j\in \mathbb{F}_q[x]$ such that the following hold  
  \begin{enumerate}
      \item $v_i^2=u_if(\alpha_i)$ for $1\leq i \leq n$.
      \item $\sum_{j=b}^{b+2r+1}f_jS_{j-b}=0$, where $b=n-2k-2r+1$.
      \item $\sum_{j=s}^{s+l-2k+3}f_jS_{j-s}=0$  for  $l$ with $(k-1)+r\leq l\leq (2k-3)+r$,  where $s=n-l-1$ and $f_j=0$ if $j<0$.    
  \end{enumerate}
\end{theorem}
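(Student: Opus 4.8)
The plan is to mirror the proof of Theorem \ref{thmsoh1}, the essential new feature being the \emph{gap} in the exponents forced by the hypothesis $r\ge k-1$. By Lemma \ref{selflemma}, $C_{h,\bm{v}}$ is self-orthogonal if and only if $G_{h,\bm{v}}G_{h,\bm{v}}^T=0$; since the rows of $G_{h,\bm{v}}$ carry exponents from $\{0,1,\dots,k-2\}\cup\{h\}$, this is equivalent to $\sum_{i=1}^n v_i^2\alpha_i^{a+b}=0$ for all $a,b\in\{0,1,\dots,k-2,h\}$. First I would record which exponents $a+b$ occur. Products among the first $k-1$ rows give $0\le l\le 2k-4$; a low row paired with the last row gives $(k-1)+r\le l\le (2k-3)+r$; and the last row with itself gives $l=2h=2k-2+2r$. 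Because $r\ge k-1$ we have $(k-1)+r\ge 2k-2$, so these three blocks are pairwise disjoint, with a genuine gap at $l=2k-3$ that is precisely absent in the regime $1\le r\le k-2$ of Theorem \ref{thmsoh1}. This gap is the crux, since it means the only ``low'' constraints are $\sum_i v_i^2\alpha_i^l=0$ for $0\le l\le 2k-4$.

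For the forward direction I would argue that these $2k-3$ low constraints force $(v_1^2,\dots,v_n^2)$ into the solution space of the Vandermonde system $\sum_i\alpha_i^l x_i=0$, $0\le l\le 2k-4$, whose coefficient matrix has rank $2k-3$, giving a solution space of dimension $n-2k+3$. By Lemma \ref{paritylemma} the $n-2k+3$ vectors $(u_1\alpha_1^j,\dots,u_n\alpha_n^j)$ for $0\le j\le n-2k+2$ are linearly independent solutions (indeed $j+l\le n-2$ throughout), hence a basis; this yields scalars $f_0,\dots,f_{n-2k+2}$ with $v_i^2=u_if(\alpha_i)$ for $f(x)=\sum_{j=0}^{n-2k+2}f_jx^j$, which is item $(1)$ and fixes $\deg f\le n-2k+2$.

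Substituting this form into the remaining constraints and applying Lemma \ref{paritylemma} in the guise $\sum_i u_i\alpha_i^{e}=S_{e-n+1}$ (with $S_t=0$ for $t<0$), the top constraint $\sum_i v_i^2\alpha_i^{2k-2+2r}=0$ becomes $\sum_j f_jS_{j-b}=0$ with $b=n-2k-2r+1$; since the degree of $f$ satisfies $n-2k+2=b+2r+1$, this is exactly item $(2)$. Likewise each middle-block constraint at exponent $l$, $(k-1)+r\le l\le (2k-3)+r$, becomes $\sum_j f_jS_{j-s}=0$ with $s=n-l-1$, and $n-2k+2=s+(l-2k+3)$ delivers item $(3)$. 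The converse reverses these computations verbatim: assuming $(1)$--$(3)$ one verifies every equation $\sum_i v_i^2\alpha_i^{a+b}=0$ — the low block from $v_i^2=u_if(\alpha_i)$ with Lemma \ref{paritylemma}, the middle block from $(3)$, and the top exponent from $(2)$ — so $G_{h,\bm{v}}G_{h,\bm{v}}^T=0$, exactly as in the converse of Theorem \ref{thmsoh1}.

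The step I expect to be most delicate is not any single calculation but the bookkeeping that pins down $\deg f$: one must see that separating the exponents at $r\ge k-1$ deletes the upper ``low'' constraints present in Theorem \ref{thmsoh1}, raising the degree from $n-2k-r+1$ to $n-2k+2$, and then check that the two summation ranges in $(2)$ and $(3)$ both terminate exactly at this degree. Keeping the index shifts $b=n-2k-2r+1$ and $s=n-l-1$ consistent with the convention $S_t=0$ for $t<0$ (and with $f_j=0$ for $j<0$) is where an error would most likely creep in.
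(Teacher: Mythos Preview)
Your proposal is correct and follows essentially the same approach as the paper: both reduce self-orthogonality via Lemma \ref{selflemma} to the three blocks of exponent equations, use the $2k-3$ low equations to place $(v_i^2)$ in an $(n-2k+3)$-dimensional space with the basis $(u_i\alpha_i^j)_{0\le j\le n-2k+2}$ from Lemma \ref{paritylemma}, and then translate the middle and top blocks into conditions $(3)$ and $(2)$ via the identity $\sum_i u_i\alpha_i^e=S_{e-n+1}$. Your explicit discussion of why the gap at $l=2k-3$ appears when $r\ge k-1$ (and how this shifts $\deg f$ from $n-2k-r+1$ to $n-2k+2$) is a helpful addition not spelled out in the paper.
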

\begin{proof}
    By Lemma \ref{selflemma}, $C_{h,\bm{v}}$ is self-orthogonal if and only if $G_{h,\bm{v}}G_{h,\bm{v}}^T=0$, that is,
    \begin{align}
          &\sum_{i=1}^nv_i^2\alpha_i^l=0  \text{ for } 0\leq l\leq 2k-4, \label{eq4.4}\\
           &\sum_{i=1}^nv_i^2\alpha_i^l=0  \text{ for all } l \text{ with } (k-1)+r\leq l\leq (2k-3)+r,\label{eq4.5} \\
            &\sum_{i=1}^nv_i^2\alpha_i^{2k-2+2r}=0\label{eq4.6}.&    
   \end{align}
Let $C_{h,\bm{v}}$ be self-orthogonal, that is, Eqs. \ref{eq4.4}, \ref{eq4.5}, \ref{eq4.6} hold. Then, Eq. \ref{eq4.4} implies that  $(v_1^2,v_2^2,\dots,v_n^2)$ is a nonzero solution of the following system of linear equations
 \begin{align} \label{eq4.7}
           &\sum_{i=1}^n\alpha_i^l x_i=0  \text{ for all } l \text{ with } 0\leq l\leq 2k-4. 
        \end{align}
The rank of the coefficient matrix of Eq. \ref{eq4.7} is $2k-3$. Therefore, the dimension of the solution space of Eq. \ref{eq4.7} is $n-2k+3$. Also, by Lemma \ref{paritylemma},
\begin{align*}
\begin{split}
    (u_1,u_2,\dots,u_n), (u_1\alpha_1,u_2\alpha_2,\dots,u_n\alpha_n),\dots,(u_1\alpha_1^{n-2k+2},u_2\alpha_2^{n-2k+2},\dots,u_n\alpha_n^{n-2k+2})
    \end{split}
    \end{align*}
   are $n-2k+3$ linearly independent solutions of Eq. \ref{eq4.7}. This implies that there exist scalars $f_0,f_1,\dots,f_{n-2k+2}$ such that 
   \begin{align*}
   \begin{split}
       v_i^2=\sum_{j=0}^{n-2k+2} u_if_j\alpha_i^j.
        \end{split}
   \end{align*}
Denote $f(x)=\sum_{j=0}^{n-2k+2}f_jx^j$. Then  $v_i^2= u_if(\alpha_i)$ for $1\leq i\leq n$. Furthermore, $v_i^2$, $1\leq i\leq n$, also satisfies Eqs. \ref{eq4.6}, therefore, we have 
\begin{align*}
    0=\sum_{j=0}^{n-2k+2} f_j\sum_{i=1}^n u_i\alpha_{i}^{2k-2+2r+j}=\sum_{j=0}^{n-2k+2}f_jS_{2k-2+2r+j-n+1}\\
    =\sum_{j=b}^{b+2r+1}f_jS_{j-b}=0, \text{ where } b=n-2k-2r+1,
\end{align*} 
using Lemma \ref{paritylemma}. Moreover, $v_i^2$'s satisfy Eq. \ref{eq4.5} as well, for $(k-1)+r\leq l\leq (2k-3)+r$, 
\begin{align*}
    0=\sum_{j=0}^{n-2k+2}f_j\sum_{i=1}^nu_i\alpha_i^{l+j}=\sum_{j=0}^{n-2k+2}f_jS_{l+j-n+1}\\
    =\sum_{j=s}^{s+l-2k+3}f_jS_{j-s}=0,  \text{ where } s=n-l-1,   \end{align*}
    using Lemma \ref{paritylemma}.
   The converse follows similar to that of Theorem \ref{thmsok1k2}.
\end{proof}

\begin{example}
 Let $q=19$, $n=10$, $k=4$, $r=2$, $\Lambda=\{0,1,2,3,4,5,8,11,15,16\}$ and $f(x)=x+2\in \mathbb{F}_q[x]$. Then $S_1=8$, $S_2=3$ and 
  $$u_1=2,u_2=12,u_3=6,u_4=1,u_5=11,u_6=4,u_7=13,u_8=10,u_9=7,u_{10}=10.$$
  Consequently,
  $$\{u_if(\alpha_i): 1\leq i\leq 10\}=\{4,17,5,5,9,9,16,16,5,9\},$$
  all are squares in $\mathbb{F}_q$. Thus, we take $v_i$'s such that $v_i^2=u_if(\alpha_i)$, $1\leq i\leq 10$, that are, 
  \begin{align*}
   v_1=17, v_2=6,v_3=9,v_4=9,v_5=16,v_6=16,v_7=4,v_8=4,v_9=9,
v_{10}=16.   
  \end{align*}  
  The first condition of Theorem \ref{thmsoh1} is satisfied.
  For the  second condition, note that $s=-1$ and  
  $$f_{-1}S_0+f_0S_1+f_1S_2=0+16+3=0.$$
  Thus, the code $C_v$ is a self-orthogonal code over $\mathbb{F}_q$ with a generator matrix 
  \begin{align*}
      G_{\bm{v}}=\begin{bmatrix}
          17 &  6 & 9  &9 &16 &16 & 4 & 4 & 9& 16\\
 0&  6& 18&  8&  7&  4 &13 & 6&  2 & 9\\
 0 & 6& 17&  5 & 9 & 1 & 9&  9& 11 &11\\
 0 & 6 & 3&  2 & 6& 11& 10&  9 &18 & 7
      \end{bmatrix}.
  \end{align*}
\end{example}

\begin{theorem}\label{thmsdh1}
   Let $n=2k\geq 6$, $\bm{v}=(v_1,v_2,\dots,v_n)\in (\mathbb{F}_q^*)^{n}$, $u_i$, $1\leq i\leq n$ be as in Eq. \ref{ui}, and $h=(k-1)+r$ for some $1\leq r\leq k-2$. Then $C_{h,\bm{v}}$ is self-dual if and only if $r=1$ and  the followings hold 
   \begin{align*}
        (i)& \text{ all } u_i\text{'s }, 1\leq i\leq n, \text{ are squares or non-squares simultaneously}.  \\
       (ii)&\  S_1=0.
 \end{align*}
\end{theorem}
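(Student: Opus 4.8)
The plan is to reduce the self-duality question to the self-orthogonality criterion already established in Theorem \ref{thmsoh1}, and then exploit the fact that the length is extremal ($n=2k$). First I would invoke Lemma \ref{selflemma}: since $n=2k$, the code $C_{h,\bm{v}}$ is self-dual if and only if it is self-orthogonal. Because the hypothesis $1\le r\le k-2$ lies in the range covered by Theorem \ref{thmsoh1}, I can characterize self-duality through the existence of a polynomial $f(x)=\sum_{j=0}^{n-2k-r+1}f_jx^j$ satisfying (1) $v_i^2=u_if(\alpha_i)$ and (2) $\sum_{j=s}^{s+r}f_jS_{j-s}=0$ with $s=n-2k-2r+1$.

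The crux of the argument, and the step I expect to carry the real content, is the degree bookkeeping forced by $n=2k$. Substituting $n=2k$ collapses the admissible degree of $f$ to $n-2k-r+1=1-r$. For $r\ge 2$ this degree is negative, so the only candidate is $f\equiv 0$; but then $v_i^2=u_if(\alpha_i)=0$, contradicting $v_i\in\mathbb{F}_q^*$. Hence no self-orthogonal $C_{h,\bm{v}}$ can exist unless $r=1$, which already pins down the necessity of $r=1$ (equivalently $h=k$). Equivalently, one may phrase the same obstruction through the solution-space dimension $n-2k-r+2$ appearing in the proof of Theorem \ref{thmsoh1}, which drops to $\le 0$ for $r\ge 2$ and leaves only the trivial solution.

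With $r=1$ fixed, $f$ must be a nonzero constant $f(x)=f_0$, so condition (1) becomes $v_i^2=f_0u_i$ for all $i$. I would then evaluate condition (2): with $n=2k$ and $r=1$ one gets $s=-1$, and since $f_j=0$ for $j<0$ the sum reduces to $f_0S_1=0$; as $f_0\neq 0$ this is precisely (ii) $S_1=0$. Finally, the existence of $v_i\in\mathbb{F}_q^*$ with $v_i^2=f_0u_i$ says exactly that each $f_0u_i$ is a nonzero square; since multiplication by the fixed scalar $f_0$ preserves or reverses the quadratic character uniformly, this forces all $u_i$ to be squares (when $f_0$ is a square) or all to be non-squares (when $f_0$ is a non-square), which is condition (i).

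For the converse I would run these implications backwards: assuming $r=1$, (i) and (ii), pick $\lambda$ equal to $1$ if all $u_i$ are squares and equal to a fixed non-square otherwise, so that every $\lambda u_i$ is a nonzero square; choose $v_i$ with $v_i^2=\lambda u_i$ and set $f(x)=\lambda$. Then condition (1) of Theorem \ref{thmsoh1} holds by construction, and condition (2) reduces to $\lambda S_1=0$, which holds by (ii); Theorem \ref{thmsoh1} yields self-orthogonality, and Lemma \ref{selflemma} upgrades it to self-duality since $n=2k$. The only delicate point is the quadratic-character bookkeeping in the last paragraph of the forward direction; everything else is a direct specialization of Theorem \ref{thmsoh1} to the extremal length.
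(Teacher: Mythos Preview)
Your proposal is correct and follows essentially the same approach as the paper: invoke Theorem \ref{thmsoh1}, observe that the degree bound $n-2k-r+1=1-r$ forces $r=1$, and then read off conditions (i) and (ii) from the specialization $f(x)=f_0$, $s=-1$. Your treatment is in fact more complete than the paper's, which omits the explicit converse and the quadratic-character bookkeeping you spell out.
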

\begin{proof}
    By Theorem \ref{thmsoh1}, $C_{h,\bm{v}}$ is self-dual if and only if there exists a polynomial $f(x)=\sum_{j=0}^{n-2k-r+1}f_jx^j\in \mathbb{F}_q[x]$, that is, $n-2k-r+1\geq0$. Consequently, $r=1$. By Theorem \ref{thmsoh1}, if $r=1$, then $f(x)=f_0$ and $s=-1$. The first condition of Theorem \ref{thmsoh1} implies that $v_i^2=f_0u_i$ for $1\leq i\leq n$, that is,  all $u_i$'s are squares or non-squares simultaneously and $f_0\neq 0$.
By the second condition in Theorem \ref{thmsoh1}, we get 
$$f_{-1}S_0+f_0S_1=0\implies f_0S_1=0\implies S_1=0.$$
This completes the proof. 
\end{proof}

Since every element of a field with even characteristic is a square, we have a simplified characterization for self-duality of the code $C_{h,\bm{v}}$ over $\mathbb{F}_q$ when $q$ is even.

\begin{corollary}\label{qeven}
  Let $q$ be even, $n=2k\geq 6$, $\bm{v}=(v_1,v_2,\dots,v_n)\in (\mathbb{F}_q^*)^{n}$, $u_i$, $1\leq i\leq n$ be as in Eq. \ref{ui}, and $h=(k-1)+r$ for some $1\leq r\leq k-2$. Then $C_{h,\bm{v}}$ is self-dual if and only if $r=1$ and   $S_1=0$.
\end{corollary}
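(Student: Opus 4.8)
The plan is to reduce the statement directly to Theorem \ref{thmsdh1}, whose hypotheses ($n=2k\geq 6$, $\bm{v}\in(\mathbb{F}_q^*)^n$, and $h=(k-1)+r$ with $1\leq r\leq k-2$) coincide with those imposed here. That theorem already characterizes self-duality of $C_{h,\bm{v}}$ by three requirements: that $r=1$, that condition (i) hold (all $u_i$ are simultaneously squares or simultaneously non-squares), and that condition (ii) $S_1=0$ hold. The only work to be done is to show that condition (i) becomes automatic once $q$ is assumed to be even, so that the characterization collapses to $r=1$ and $S_1=0$.

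The key observation is that in even characteristic the Frobenius map $\varphi\colon \mathbb{F}_q\to\mathbb{F}_q$ given by $\varphi(x)=x^2$ is a field automorphism, hence a bijection; in particular every element of $\mathbb{F}_q^*$ admits a (unique) square root and is therefore a square. Consequently each $u_i$ is a square for every admissible choice of $\Lambda$, and so condition (i) of Theorem \ref{thmsdh1} holds trivially. Feeding this back into Theorem \ref{thmsdh1}, the stated equivalence reduces exactly to ``$r=1$ and $S_1=0$,'' which is the claim.

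I do not anticipate any real obstacle: the entire content of the corollary is the elementary fact that squaring is surjective on a field of characteristic $2$, after which the result is a direct appeal to the previously established Theorem \ref{thmsdh1}. The one point worth stating carefully, to keep the argument self-contained, is the justification that $\varphi$ is bijective, which follows immediately because $\varphi$ is an injective endomorphism of the finite set $\mathbb{F}_q$ and hence surjective.
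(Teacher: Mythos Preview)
Your proposal is correct and follows exactly the paper's approach: the paper's proof reads in its entirety ``The proof follows from Theorem \ref{thmsdh1},'' and the preceding sentence in the paper already records the same observation you make, that every element of a field of even characteristic is a square. Your write-up simply fills in the one-line justification that condition (i) of Theorem \ref{thmsdh1} is vacuous when $q$ is even.
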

\begin{proof}
    The proof follows from Theorem \ref{thmsdh1}.
\end{proof}

\begin{corollary}
 Let $q\geq 8$ be even, $n=q$, $h=\frac{q}{2},k=\frac{q}{2}$, and $\Lambda=\{\alpha_1,\alpha_2,\dots,\alpha_{q}\}=\mathbb{F}_q$. Then the code $C_{h,\bm{v}}$ generated by $G_{h,\bm{v}}$ is a self-dual code  for some $\bm{v}\in (\mathbb{F}_q^*)^n$ with parameters $[q,\frac{q}{2},\frac{q}{2}]$.  
\end{corollary}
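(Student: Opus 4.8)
The plan is to obtain self-duality from the criterion already proved in Corollary~\ref{qeven} and to determine the minimum distance directly from the polynomial-evaluation description of the code. First I would record that the hypotheses force $r=1$: writing $h=(k-1)+r$ together with $h=k=q/2$ gives $r=1$, which lies in the admissible range $1\le r\le k-2$ because $k=q/2\ge 4$, and moreover $n=q=2k\ge 6$. Thus every hypothesis of Corollary~\ref{qeven} is met, and that corollary reduces self-duality to the single numerical condition $S_1=0$.

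Next I would verify $S_1=0$. Since $\Lambda=\mathbb{F}_q$, we have $S_1=\sum_{\alpha\in\mathbb{F}_q}\alpha$, which is $-1$ times the coefficient of $x^{q-1}$ in $\prod_{\alpha\in\mathbb{F}_q}(x-\alpha)=x^q-x$; this coefficient vanishes for $q>2$, so $S_1=0$. Because $q$ is even, every nonzero element of $\mathbb{F}_q$ is a square, so the square condition on the $u_i$ appearing in the self-duality criterion is automatic, and one may choose $\bm{v}\in(\mathbb{F}_q^*)^n$ with $v_i^2=u_i$ (solvable since each $u_i\ne 0$). By Corollary~\ref{qeven} the code $C_{h,\bm{v}}$ is then self-dual for this $\bm{v}$. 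In particular $n=2k$ (consistent with $q=2\cdot(q/2)$) and $\dim C_{h,\bm v}=k=q/2$.

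It then remains to show $d=q/2$. A codeword of $C_{h,\bm v}$ is $(v_1f(\alpha_1),\dots,v_nf(\alpha_n))$ with $f$ ranging over $\mathrm{span}\{1,x,\dots,x^{k-2},x^{k}\}$, that is, over polynomials of degree at most $k$ whose $x^{k-1}$-coefficient vanishes. As each $v_i\ne 0$, the weight of this codeword equals $q$ minus the number of roots of $f$ in $\mathbb{F}_q$. Any nonzero such $f$ has degree at most $k$, hence at most $k$ roots, so its codeword has weight at least $q-k=q/2$; therefore $d\ge q/2$.

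For the matching upper bound I would exhibit a codeword of weight $q/2$. Writing $q=2^m$, choose an $\mathbb{F}_2$-subspace $V\subseteq\mathbb{F}_q$ with $\dim_{\mathbb{F}_2}V=m-1$, so $|V|=q/2=k$. A short counting argument shows $\sum_{\gamma\in V}\gamma=0$ whenever $\dim_{\mathbb{F}_2}V\ge 2$, which holds since $m-1\ge 2$ for $q\ge 8$. Then $f(x)=\prod_{\gamma\in V}(x-\gamma)$ is monic of degree $k$ and its $x^{k-1}$-coefficient equals $-\sum_{\gamma\in V}\gamma=0$, so $f$ lies in the required span while having exactly $k$ roots in $\mathbb{F}_q$; the associated codeword has weight $q-k=q/2$. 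Hence $d=q/2$ and the parameters are $[q,q/2,q/2]$. The main obstacle is precisely this last step: producing an admissible polynomial that attains the maximal number $k$ of roots, equivalently finding a $k$-element subset of $\mathbb{F}_q$ of zero sum, and the subspace construction is exactly where the assumption $q\ge 8$ is needed.
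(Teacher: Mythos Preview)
Your proof is correct and, for the self-duality part, proceeds exactly as in the paper via Corollary~\ref{qeven} and the vanishing of $S_1$ over $\Lambda=\mathbb{F}_q$.

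The difference lies in how you pin down the minimum distance. The paper observes that $C_{h,\bm v}$ is a subcode of $GRS(n,k+1,\Lambda,v)$ (giving $d\ge q/2$) and then invokes Theorem~\ref{thmmdshk} to conclude the code is not MDS, hence $d=q/2$. You instead argue the lower bound directly from the root count of polynomials in $\mathrm{span}\{1,x,\dots,x^{k-2},x^k\}$ and, for the upper bound, explicitly construct a weight-$q/2$ codeword from the linearized polynomial $\prod_{\gamma\in V}(x-\gamma)$ with $V$ an $\mathbb{F}_2$-hyperplane. The two arguments are equivalent at heart: violating the MDS criterion of Theorem~\ref{thmmdshk} for $h-k+1=1$ amounts precisely to exhibiting a $k$-subset of $\Lambda$ with zero sum, and your hyperplane is just such a subset. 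What your version buys is that it is self-contained (it does not appeal to Theorem~\ref{thmmdshk}), produces an explicit minimum-weight codeword, and makes transparent exactly where the hypothesis $q\ge 8$ enters (namely $\dim_{\mathbb{F}_2}V\ge 2$). The paper's version is shorter because it leverages the MDS characterization already established.
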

\begin{proof}
 It is obvious that $S_1=0$. By Corollary \ref{qeven}, the code $C_{h,\bm{v}}$ is self-dual. Moreover, as $C_{h,\bm{v}}$ is a subcode of $GRS(n,k+1,\Lambda,v)$ therefore $q/2\leq d(C_{h,\bm{v}})\leq q/2+1$. By Theorem \ref{thmmdshk}, $C_{h,\bm{v}}$ is not MDS. Thus, $d(C_{h,\bm{v}})=q/2$.    
\end{proof}

\begin{corollary}
Let $n=2k\geq8$ and $h=(k-1)+r$ with $k-1\leq r\leq q-k-3$. Then code $C_{h,\bm{v}}$ is never self-dual for any $\bm{v}=(v_1,v_2,\dots,v_n)\in (\mathbb{F}_q^*)^{n}$. 
\end{corollary}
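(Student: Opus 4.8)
The plan is to reduce self-duality to the self-orthogonality criterion of Theorem \ref{thmsoh2} and then to show that the linear conditions it imposes on the defining polynomial are incompatible with a nowhere-vanishing weight vector. Since $n=2k$, Lemma \ref{selflemma} says $C_{h,\bm v}$ is self-dual exactly when it is self-orthogonal, and because $r\ge k-1$ the applicable criterion is Theorem \ref{thmsoh2} rather than Theorem \ref{thmsoh1}. With $n=2k$ the admissible polynomial has degree at most $n-2k+2=2$, so write $f(x)=f_0+f_1x+f_2x^2$. Conditions (2)--(3) of Theorem \ref{thmsoh2} then become a homogeneous linear system in $(f_0,f_1,f_2)$ whose coefficients are complete homogeneous symmetric functions: condition (3) supplies the $k-1$ equations $f_0S_m+f_1S_{m+1}+f_2S_{m+2}=0$ for $m=r-k,\dots,r-2$, while condition (2) adds one more equation at $m=2r-1$.

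First I would isolate the three equations of condition (3) indexed by $m=r-k,\,r-k+1,\,r-k+2$, which are available because $k\ge4$ forces $k-1\ge3$. Their coefficient matrix is the shifted Hankel matrix $[S_{(r-k)+i+j}]_{0\le i,j\le2}$. By the Jacobi--Trudi identity \cite{Macdonald1995} (already used in Proposition \ref{detk1k2}), reversing the rows shows this determinant equals, up to sign, the rectangular Schur function $s_{(r-k+2,\,r-k+2,\,r-k+2)}(\alpha_1,\dots,\alpha_n)$. If this determinant is nonzero, then $(f_0,f_1,f_2)=0$, hence $f\equiv0$; but then condition (1) gives $v_i^2=u_if(\alpha_i)=0$, contradicting $v_i\in\mathbb F_q^*$. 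This settles every $\Lambda$ for which the relevant Hankel/Schur determinant is nonzero.

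The delicate point, and what I expect to be the main obstacle, is that over a finite field a Schur function can vanish at the prescribed nodes, so for special $\Lambda$ the condition-(3) system is rank-deficient and admits a nonzero $f$ (this already occurs for node sets closed under negation, where the odd symmetric functions vanish). For such degenerate configurations I would not try to force $f\equiv0$; instead I would combine the rank-deficient condition-(3) relations with condition (2)---the equation at $m=2r-1$, which couples the low-index $S_m$ to $S_{2r},S_{2r+1}$ through the Newton-type relation \eqref{eqA}---to pin $f$ down up to a scalar, and then argue that condition (1) must fail, i.e.\ the scalars $u_if(\alpha_i)$ cannot all lie in a single square class of $\mathbb F_q^*$. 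Controlling the quadratic characters of the products $u_i=\prod_{j\ne i}(\alpha_i-\alpha_j)^{-1}$ against the evaluations $f(\alpha_i)$, uniformly in $i$, is the genuinely hard part of the argument, and I would expect this to require a careful choice of which exponents from the parity-check structure to exploit.

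Finally I would assemble the two regimes: wherever the rectangular Schur determinant is nonzero one gets $f\equiv0$ and an immediate contradiction with $v_i\ne0$, and wherever it vanishes the quadratic-character obstruction from condition (1) must be shown to preclude a valid $\bm v$. If both regimes can be closed, then $G_{h,\bm v}G_{h,\bm v}^{T}=0$ has no solution with all $v_i\ne0$, so by Lemma \ref{selflemma} the code $C_{h,\bm v}$ is never self-dual. I would regard the second regime as the crux, and would want to verify it against small cases (e.g.\ $k=4$, $r=k-1$ over fields with $4k\mid q-1$) before trusting a uniform argument.
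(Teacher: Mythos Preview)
Your proposal is incomplete, and you yourself flag the missing piece: after reducing via Theorem~\ref{thmsoh2} to a homogeneous linear system in $(f_0,f_1,f_2)$, the generic case (nonvanishing $3\times3$ Hankel/Schur determinant) forces $f\equiv0$ and hence $v_i=0$, but the degenerate case---where that Schur function vanishes on $\Lambda$---is left entirely open. The quadratic-character obstruction you sketch for that regime is not carried out, and there is no indication of how it would be made uniform in $\Lambda$, $k$, $r$; this is a genuine gap, not a routine detail.

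The paper bypasses Theorem~\ref{thmsoh2} altogether and argues much more directly: writing $G_{h,\bm v}G_{h,\bm v}^{T}=0$ as a homogeneous linear system in the $n=2k$ unknowns $v_1^2,\dots,v_n^2$, it selects the $2k\times2k$ submatrix $A$ of the coefficient matrix with row exponents $0,1,\dots,2k-4$ and $(k-1)+r,\,k+r,\,k+r+1$, and asserts that $A$ has rank $2k$ (invoking only $k+r+1\le q-2$), so the only solution is $\bm v=0$. No polynomial $f$, no case split, no character theory. That said, your concern about degenerate $\Lambda$ is not misplaced even along this route: by Jacobi--Trudi (cf.\ Proposition~\ref{detk1k2}), $\det A$ equals the Vandermonde determinant times a Schur function---for $r=k-1$ this is exactly $\sigma_3(\alpha_1,\dots,\alpha_{2k})$---and that can vanish (take $\Lambda$ closed under negation in odd characteristic). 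So the paper's full-rank assertion, as written, is vulnerable at precisely the configurations you single out; whichever path one follows, those degenerate $\Lambda$ require an additional argument.
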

\begin{proof}
On the contrary, let $C_{h,\bm{v}}$ be a self-dual code for some $\bm{v}=(v_1,v_2,\dots,v_n)\in (\mathbb{F}_q^*)^{n}$. Then by Lemma \ref{selflemma}, $G_{h,\bm{v}}G_{h,\bm{v}}^T=0$, that is,
    \begin{align*}
          &\sum_{i=1}^nv_i^2\alpha_i^l=0  \text{ for } 0\leq l\leq 2k-4,\\
           &\sum_{i=1}^nv_i^2\alpha_i^l=0  \text{ for all } l \text{ with } (k-1)+r\leq l\leq (2k-3)+r, \\
            &\sum_{i=1}^nv_i^2\alpha_i^{2k-2+2r}=0.
   \end{align*}
This implies that $v_i^2$'s are a nonzero solution of the following system of linear equations  
\begin{align*}
          &\sum_{i=1}^n\alpha_i^lx_i=0  \text{ for } 0\leq l\leq 2k-4,\\
           &\sum_{i=1}^n\alpha_i^lx_i=0  \text{ for all } l \text{ with } (k-1)+r\leq l\leq (2k-3)+r, \\
            &\sum_{i=1}^n\alpha_i^{2k-2+2r}x_i=0.
   \end{align*}
Consider a submatrix of the augmented matrix of the above system of linear equations
\begin{align*}
   A= \begin{bmatrix}
        \alpha_1^0 &\alpha_2^0 &\dots & \alpha_n^0\\
        \alpha_1 &\alpha_2 &\dots & \alpha_n\\
        \vdots &\vdots &\dots &\vdots\\
        \alpha_1^{2k-4} &\alpha_2^{2k-4} &\dots & \alpha_n^{2k-4}\\
        \alpha_1^{(k-1)+r} &\alpha_2^{(k-1)+r} &\dots & \alpha_n^{(k-1)+r}\\
        \alpha_1^{(k-1)+r+1} &\alpha_2^{(k-1)+r+1} &\dots & \alpha_n^{(k-1)+r+1}\\
        \alpha_1^{(k-1)+r+2} &\alpha_2^{(k-1)+r+2} &\dots & \alpha_n^{(k-1)+r+2}\\
    \end{bmatrix}.
\end{align*}
Since $k+r+1\leq k+q-k-3+1=q-2$,  the rank of matrix $A$ is $2k$ (equal to the number of variables). Hence, the above system of linear equations has only a zero solution, which is a contradiction to $\bm{v}=(v_1,v_2,\dots,v_n)\in (\mathbb{F}_q^*)^{n}$.
\end{proof}


\section{Conclusion}\label{conclusion}
In this work, we constructed two families of linear codes by modifying the generator matrix of GRS codes. For these families, we explicitly determined the parity-check matrices and established necessary and sufficient conditions on the evaluation points under which the codes achieve the MDS property. Using the Schur square method, we further showed that these codes exhibit the non-GRS property. As a result, we concluded that the non-GRS MDS codes constructed in this paper contain codes that are not equivalent to the codes in \cite{Liu2025,Li2024}. Further, we investigated self-orthogonality and self-duality of the constructed codes by deriving necessary and sufficient conditions. In addition, we presented several illustrative examples and explicit constructions to demonstrate the results.
The further generalization to the code $C_I$, obtained by deleting an arbitrary number of rows from the Vandermonde matrix, is a natural 
direction of study. However, this extension is substantially more challenging than the cases $C_{i,j}$. In particular, for such a general code $C_I$, the necessary and sufficient conditions for the MDS property become considerably more complicated, which makes the explicit construction of corresponding MDS codes highly difficult and technically challenging. 
We identify the study of $C_I$ in full generality as an important direction for future research.


\section*{Acknowledgment}

The authors were supported by the UAEU-AUA grant G00004614. 


\bibliographystyle{abbrv}
	\bibliography{ref}

@book{cramer1750,
  title={Introduction {\`a} l'analyse des lignes courbes alg{\'e}briques},
  author={Cramer, Gabriel},
  year={1750},
  publisher={Chez les fr{\`e}res Cramer et C. Philibert}
}

@misc{horn2013,
  title={Matrix Analysis},
  author={Horn, R C. and R. Johnson, R.},
  year={2013},
  publisher={Cambridge University Press}
}

@article{Glynn1986,
  title={The non-classical 10-arc of {PG(4, 9)}},
  author={David G. Glynn},
  journal={Discret. Math.},
  year={1986},
  volume={59},
  pages={43-51},
  url={https://api.semanticscholar.org/CorpusID:26149240}
}

@article {Casse1982,
    AUTHOR = {Casse, L. R. A. and Glynn, D. G.},
     TITLE = {The solution to {B}eniamino {S}egre's problem {$I\sb{r,q},$}
              {$r=3,$} {$q=2\sp{h}$}},
   JOURNAL = {Geom. Dedicata},
  FJOURNAL = {Geometriae Dedicata},
    VOLUME = {13},
      YEAR = {1982},
    NUMBER = {2},
     PAGES = {157--163},
      ISSN = {0046-5755},
   MRCLASS = {51E20 (05B25)},
  MRNUMBER = {684151},
MRREVIEWER = {Raymond\ Hill},
       DOI = {10.1007/BF00147659},
       URL = {https://doi-org.uaeu.idm.oclc.org/10.1007/BF00147659},
}

@article{bhagat2025,
  title={Row-Column Twisted {R}eed-{S}olomon codes},
  author={Bhagat, Anuj Kumar and Singh, Harshdeep and Sarma, Ritumoni},
  journal={ArXiv:2509.06919},
  year={2025}
}

@article{Calderbank1995,
  title={Good quantum error-correcting codes exist.},
  author={A. Robert Calderbank and Peter W. Shor},
  journal={Physical review. A, Atomic, molecular, and optical physics},
  year={1995},
  volume={54 2},
  pages={
          1098-1105
        },
  url={https://api.semanticscholar.org/CorpusID:11524969}
}

@article{Shor1995,
  title={Scheme for reducing decoherence in quantum computer memory.},
  author={Peter W. Shor},
  journal={Physical review. A, Atomic, molecular, and optical physics},
  year={1995},
  volume={52 4},
  pages={
          R2493-R2496
        },
  url={https://api.semanticscholar.org/CorpusID:30510079}
}

@article{Dougherty2008,
  title={Secret-sharing schemes based on self-dual codes},
  author={Steven T. Dougherty and Sihem Mesnager and Patrick Sol{\'e}},
  journal={2008 IEEE Information Theory Workshop},
  year={2008},
  pages={338-342},
  url={https://api.semanticscholar.org/CorpusID:16950187}
}

@article{Cramer2005,
  title={On Codes, Matroids, and Secure Multiparty Computation From Linear Secret-Sharing Schemes},
  author={Ronald Cramer and Vanesa Daza and Ignacio Gracia and Jorge Jim{\'e}nez Urr{\'o}z and Gregor Leander and Jaume Mart{\'i}-Farr{\'e} and Carles Padr{\'o}},
  journal={IEEE Transactions on Information Theory},
  year={2005},
  volume={54},
  pages={2644-2657},
  url={https://api.semanticscholar.org/CorpusID:478302}
}

@book {Conway1999,
    AUTHOR = {Conway, J. H. and Sloane, N. J. A.},
     TITLE = {Sphere packings, lattices and groups},
    SERIES = {Grundlehren der mathematischen Wissenschaften}, 
    VOLUME = {290},
   EDITION = {Third},
 PUBLISHER = {Springer-Verlag, New York},
      YEAR = {1999},
     PAGES = {lxxiv+703},
      ISBN = {0-387-98585-9},
   MRCLASS = {11H31 (05B40 11H06 20D08 52C07 52C17 94B75 94C30)},
  MRNUMBER = {1662447},
MRREVIEWER = {Renaud\ Coulangeon},
       DOI = {10.1007/978-1-4757-6568-7},
       URL = {https://doi-org.uaeu.idm.oclc.org/10.1007/978-1-4757-6568-7},
}

@article{Liu2025,
  title={Column Twisted {R}eed-{S}olomon Codes as {MDS} Codes},
  author={Wei Liu and Jinquan Luo and Puyin Wang and Dengxin Zhai},
  journal={ArXiv:2507.08755},
  year={2025},
  volume={},
  url={https://api.semanticscholar.org/CorpusID:280292300}
}

@article{Sakakibara2013,
  title={Application of random relaying of partitioned {MDS} codeword block to persistent relay {CSMA} over random error channels},
  author={Katsumi Sakakibara and Jumpei Taketsugu},
  journal={2013 5th International Congress on Ultra Modern Telecommunications and Control Systems and Workshops (ICUMT)},
  year={2013},
  pages={106-112},
  url={https://api.semanticscholar.org/CorpusID:9432707}
}

@article{Cadambe2011,
  title={Permutation code: Optimal exact-repair of a single failed node in {MDS} code based distributed storage systems},
  author={Viveck R. Cadambe and Cheng Huang and Jin Li},
  journal={2011 IEEE International Symposium on Information Theory Proceedings},
  year={2011},
  pages={1225-1229},
  url={https://api.semanticscholar.org/CorpusID:10710487}
}

@article{Mirandola2015,
  title={Critical Pairs for the Product Singleton Bound},
  author={Diego Mirandola and Gilles Z{\'e}mor},
  journal={IEEE Transactions on Information Theory},
  year={2015},
  volume={61},
  pages={4928-4937},
  url={https://api.semanticscholar.org/CorpusID:3053379}
}

@article{Han2024,
  title={Explicit constructions of {NMDS} self-dual codes},
  author={Dongchun Han and Hanbin Zhang},
  journal={Des. Codes Cryptogr.},
  year={2024},
  volume={92},
  pages={3573-3585},
  url={https://api.semanticscholar.org/CorpusID:271132365}
}

@article{Abdukhalikov2025nonrs,
  title={Some constructions of non-generalized {R}eed-{S}olomon {MDS} Codes},
  author={Kanat Abdukhalikov and Cunsheng Ding and Gyanendra Kumar Verma},
  journal={ArXiv:2506.04080},
  year={2025},
  volume={},
  url={https://api.semanticscholar.org/CorpusID:279154966}
}

@article{Liu2024,
  title={Constructions of non-Generalized {R}eed-{S}olomon {MDS} codes},
  author={Shengwei Liu and Hongwei Liu and Fr{\'e}d{\'e}rique E. Oggier},
  journal={arXiv:2412.08391},
  year={2024},
  volume={},
  url={https://api.semanticscholar.org/CorpusID:274638295}
}

@article{Beelen2022,
  author    = {P. Beelen and S. Puchinger and J. Rosenkilde},
  title     = {Twisted {R}eed--{S}olomon Codes},
  journal   = {IEEE Transactions on Information Theory},
  volume    = {68},
  number    = {5},
  pages     = {3047--3061},
  year      = {2022},
  month     = may,
  doi       = {10.1109/TIT.2022.3149513}
}

@article{Ball2019,
  title     = {Arcs in Finite Projective Spaces},
  author    = {Ball, Simeon and Lavrauw, Michel},
  journal   = {EMS Surveys in Mathematical Sciences},
  volume    = {6},
  number    = {1-2},
  pages     = {133--172},
  year      = {2019},
  doi       = {10.4171/EMSS/33},
  url       = {https://ems.press/journals/emss/articles/16705}
}

@book{Macdonald1995,
  title     = {Symmetric Functions and Hall Polynomials},
  author    = {Macdonald, Ian G.},
  edition   = {2nd},
  year      = {1995},
  publisher = {Oxford University Press},
  address   = {Oxford},
  isbn      = {9780198504504}
}

@article{Ball2012,
  author    = {Simeon Ball},
  title     = {On large subsets of a finite vector space in which every subset of basis size is a basis},
  journal   = {Journal of the European Mathematical Society},
  volume    = {14},
  number    = {3},
  pages     = {733--748},
  year      = {2012},
  doi       = {10.4171/JEMS/322}
}

@article{Segre1955,
  author    = {Beniamino Segre},
  title     = {Curve razionali normali e $k$-archi negli spazi finiti},
  journal   = {Annali di Matematica Pura ed Applicata},
  volume    = {39},
  pages     = {357--378},
  year      = {1955},
  month     = {December},
  doi       = {10.1007/BF02412874}
}

@book{Macwilliams1977,
  author    = {F. J. MacWilliams and N. J. A. Sloane},
  title     = {The Theory of Error-Correcting Codes},
  publisher = {Elsevier},
  address   = {Amsterdam, The Netherlands},
  year      = {1977}
}

@book {Hirschfeld1998,
    AUTHOR = {Hirschfeld, J. W. P.},
     TITLE = {Projective geometries over finite fields},
    SERIES = {Oxford Mathematical Monographs},
   EDITION = {Second},
 PUBLISHER = {The Clarendon Press, Oxford University Press, New York},
      YEAR = {1998},
     PAGES = {xiv+555},
      ISBN = {0-19-850295-8},
   MRCLASS = {51E15 (05B25 51A30)},
  MRNUMBER = {1612570},
MRREVIEWER = {T.\ G.\ Ostrom},
}

@article{Roth1989,
  title={A construction of non-{R}eed-{S}olomon type {MDS} codes},
  author={Ron M. Roth and Abraham Lempel},
  journal={IEEE Transactions on Information Theory},
  year={1989},
  volume={35},
  pages={655-657},
  url={https://api.semanticscholar.org/CorpusID:9352887}
}

@article{Dau2013,
  title={Balanced Sparsest generator matrices for {MDS} codes},
  author={Son Hoang Dau and Wentu Song and Zheng Dong and Chau Yuen},
  journal={2013 IEEE International Symposium on Information Theory},
  year={2013},
  pages={1889-1893},
  url={https://api.semanticscholar.org/CorpusID:6124009}
}

@article{Kokkala2014,
  title={On the Classification of {MDS} Codes},
  author={Janne I. Kokkala and Denis S. Krotov and Patric R. J. {\"O}sterg{\aa}rd},
  journal={IEEE Transactions on Information Theory},
  year={2014},
  volume={61},
  pages={6485-6492},
  url={https://api.semanticscholar.org/CorpusID:15649803}
}

@article{Dau2014,
  title={On the existence of {MDS} codes over small fields with constrained generator matrices},
  author={Son Hoang Dau and Wentu Song and Chau Yuen},
  journal={2014 IEEE International Symposium on Information Theory},
  year={2014},
  pages={1787-1791},
  url={https://api.semanticscholar.org/CorpusID:2210542}
}

@article{Marchi2001,
  title={Polynomials arising in factoring generalized Vandermonde determinants: an algorithm for computing their coefficients},
  author={Stefano De Marchi},
  journal={Mathematical and Computer Modelling},
  year={2001},
  volume={34},
  pages={271-281},
  url={https://api.semanticscholar.org/CorpusID:122701601}
}

@article{Zhi2025,
  title={New {MDS} codes of non-{GRS} type and {NMDS} codes},
  author={Yujie Zhi and Shixin Zhu},
  journal={Discrete Mathematics},
  year={2025},
  volume={348},
  pages={114436},
  url={https://api.semanticscholar.org/CorpusID:276369353}
}

@article{Li2024,
  title={A family of linear codes that are either non-{GRS} {MDS} codes or {NMDS} codes},
  author={Yang Li and Zhonghua Sun and Shixin Zhu},
  journal={IEEE Transactions on Communications},
  year={2025},
  url={https://api.semanticscholar.org/CorpusID:281297288}
}

@article{Wu2024,
  title={More {MDS} codes of non-{R}eed-{S}olomon type},
  author={Yansheng Wu and Ziling Heng and Chengju Li and Cunsheng Ding},
  journal={arXiv:2401.03391},
  year={2024},
  volume={},
  url={https://api.semanticscholar.org/CorpusID:266844864}
}

@article{Lavauzelle2019,
  title={Cryptanalysis of a system based on twisted {R}eed–{S}olomon codes},
  author={Julien Lavauzelle and Julian Renner},
  journal={Designs, Codes and Cryptography},
  year={2019},
  volume={88},
  pages={1285 - 1300},
  url={https://api.semanticscholar.org/CorpusID:135463487}
}

@INPROCEEDINGS{beelen2018,
  author={Beelen, Peter and Bossert, Martin and Puchinger, Sven and Rosenkilde, Johan},
  booktitle={2018 IEEE International Symposium on Information Theory (ISIT)}, 
  title={Structural Properties of Twisted {R}eed-{S}olomon Codes with Applications to Cryptography}, 
  year={2018},
  volume={},
  number={},
  pages={946-950},
  keywords={MDS Codes;Reed-Solomon Codes;McEliece Cryptosystem;Structural Attacks},
  doi={10.1109/ISIT.2018.8437923}}

@article{Jin2024,
  title={New families of non-{R}eed-{S}olomon {MDS} codes},
  author={Lingfei Jin and Liming Ma and Chaoping Xing and Haiyan Zhou},
  journal={arXiv:2411.14779},
  year={2024},
  volume={},
  url={https://api.semanticscholar.org/CorpusID:274192322}
}

\end{document}